\newtheorem{Theorem}{Theorem}
\newtheorem{Lemma}{Lemma}
\newtheorem{Proposition}{Proposition}
\newcommand{\Extend}[5]{\ext@arrow 0099{\arrowfill@#1#2#3}{#4}{#5}}
\title{Computing the Least-core and Nucleolus for \\ Threshold Cardinality Matching Games}
\author{Qizhi Fang\inst{1} \and Bo Li\inst{1} \and Xiaoming Sun\inst{2} \thanks{Email: \email{sunxiaoming@ict.ac.cn}} \and Jia Zhang\inst{2} \and Jialin Zhang\inst{2}}
\institute{School of Mathematical Sciences, Ocean University of China, Qingdao, China\and Institute of Computing Technology, Chinese Academy of Sciences, Beijing, China}
\begin{document}
\maketitle
\pagestyle{plain}
\bibliographystyle{plain}
\begin{abstract}
    Cooperative games provide a framework for fair and stable profit allocation in multi-agent systems. \emph{Core}, \emph{least-core} and \emph{nucleolus} are such solution concepts that characterize stability of cooperation. In this paper, we study the algorithmic issues on the least-core and nucleolus of threshold cardinality matching games (TCMG). A TCMG is defined on a graph $G=(V,E)$ and a threshold $T$, in which the player set is $V$ and the profit of a coalition $S\subseteq V$ is 1 if the size of a maximum matching in $G[S]$ meets or exceeds $T$, and 0 otherwise. We first show that for a TCMG, the problems of computing least-core value, finding and verifying least-core payoff are all polynomial time solvable. We also provide a general characterization of the least core for a large class of TCMG. Next, based on Gallai-Edmonds Decomposition in matching theory, we give a concise formulation of the nucleolus for a typical case of TCMG which the threshold $T$ equals $1$. When the threshold $T$ is relevant to the input size, we prove that the nucleolus can be obtained in polynomial time in bipartite graphs and graphs with a perfect matching.

\end{abstract}
    \thispagestyle{plain}
    \section{Introduction}

One of the important problems in cooperative game is how to
distribute the total profit generated by a group of agents to
individual participants. The prerequisite here is to make all the
agents work together, {\it i.e.}, form a grand coalition. To achieve
this goal,  the collective profit should be distributed properly so
as to minimize the incentive of subgroups of agents to deviate and
form coalitions of their own. This intuition is formally captured by
several solution concepts, such as the {\it core}, the {\it least-core}, and the
{\it nucleolus}, which will be the focus of this paper.
%Without going into the technical details of their
%definitions, the nucleolus is, in some sense,
%the most stable payoff allocation scheme, and as such it is
%particularly desirable when the stability of the grand coalition is important.

The algorithmic issues in cooperative games are especially
interesting since the definitions of many solution concepts would
involve in an exponential number of constraints~\cite{Schmeidler:1969}.
Megiddo~\cite{megiddo1978computational} suggested that finding a solution
should be done by an efficient algorithm (following
Edmonds~\cite{edmonds1965}), {\it i.e.}, within time polynomial in the
number of agents. Deng and Papadimitriou~\cite{Deng:1994}
suggested the computational complexity be taken into consideration as
another measure of fairness for evaluating and comparing different
solution concepts. Subsequently, various interesting complexity
and algorithmic results have been investigated.
On one hand, efficient algorithms have
been proposed for computing the core,
the least-core and the nucleolus, such as, for assignment
games\cite{Solymosi:1994}, standard tree
games\cite{granot1996kernel}, matching games\cite{Kern:2003},
airport profit games \cite{branzei2006simple}, spanning connectivity
games \cite{aziz2009wiretapping}, flow games\cite{Deng:2009} and
weighted voting games\cite{Elkind:2009}. On the other hand, some
negative results are given. For example, the problems of computing the nucleolus
and testing whether a given distribution belongs to the core or the nucleolus
are proved to be NP-hard for minimum spanning
tree games\cite{faigle1998note},  flow games and linear production
games \cite{Deng:2009}.

%As the most stable payoff allocation scheme, the
%nucleolus form entails comparisons between vectors of exponential
%length\cite{Schmeidler:1969}. Kopelowitz\cite{Kope:1967}, as well as
%Maschler, Peleg and Shapley\cite{Maschler:1979}, proposed an
%approach based on solving a sequence of linear programs (each
%exponential in the group size) to find the nucleolus. So it is a
%challenge to find the nucleolus efficiently.
%An efficient algorithm is now commonly accepted as one that takes at most a time
%bounded by a polynomial of input size, following the concept of good algorithms
%by Edmonds\cite{edmonds1965}.
%The first polynomial time algorithm of computing the nucleolus was
%given by Megiddo\cite{megiddo1978computational} for a special cost
%game on trees, in advocation of efficient algorithms for game
%solutions, following the concept of ``good algorithms" by
%Edmonds\cite{edmonds1965}.

Matching game is one of the most important combinatorial cooperative
games which has attracted much attention
\cite{Deng:1999, shapley1971assignment,Solymosi:1994,Kern:2003,Biro:2012,Chen:2012}.
%There are several works proposed to compute the nucleolus for the such game.
Assignment game
is a special case of matching game defined on a bipartite graph, which is
introduced by Shapley and Shubik\cite{shapley1971assignment}
to formulate two-sided markets. In this case, the core is
always non-empty, and the nucleolus can be found in polynomial time
\cite{Solymosi:1994}. For matching games defined on  general graphs,
Deng, et al.,\cite{Deng:1999} gave a
sufficient and necessary condition on the non-emptiness of the core.
Kern and Paulusma\cite{Kern:2003} presented an efficient algorithm for computing
the nucleolus for cardinality matching games
based on a polynomial description of the least-core of such games. Recently,
Bir\'o, Kern, and Paulusma\cite{Biro:2012} generalized the work of
\cite{Solymosi:1994} to develope an efficient algorithm for computing
the nucleolus for matching games on weighted graphs when the core is nonempty.
Chen, Lu and Zhang\cite{Chen:2012} further discussed the fractional matching games.
However, for the matching games defined on general weighted graphs,
the computational complexity on the least-core and the nucleolus is still open.

We follow the stream and study the least-core and
nucleolus of a natural variation of matching games, called
threshold matching games~\cite{Aziz:2010}. In this game model,
each agent controls a vertex in an edge-weighted graph,
and a coalition wins only if the maximum weighted matching
in the induced subgraph meets or exceeds a given threshold value.
%The work of Kern and Paulusma\cite{Kern:2003} is crucial since
%a majority of combinatorial optimization games are balanced, i.e.,
%their cores are nonempty, however, they solved the case when the core of matching game is empty.
Although Haris et al~\cite{Aziz:2010} proved that
computing the least-core and the nucleolus for
threshold matching games defined on general weighted graphs is NP-hard,
the related algorithmic problems have not been discussed when
restricted to unweighted graphs.

In this paper, we aim to compute the least-core and the nucleolus
for the threshold matching games on unweighted graph,
especially when the core is empty.
Firstly, we show that
for an arbitrary threshold value, the least-core can be obtained in polynomial time
through separation oracle technique.
By linear program duality, we further provide a general characterization of the least
core for a large class of  threshold cardinality matching games,
which can be used to simplify the sequence of linear programs of the nucleolus.
Secondly, we discuss the algorithms for the nucleolus. In the case that
the threshold is independent of input size, the nucleolus can be found
in polynomial time. Especially, when the threshold being
one (which is called \emph{edge coalitional games}), we know that finding the least-core
and the nucleolus can be done efficiently based on a clear description
of the least-core.  When the threshold value
is relevant to the input size, we prove that the least-core
and the nucleolus can also be computed in polynomial time for the games
on two typical graphs, the graphs with a perfect matching or bipartite graphs.
To our surprise, in all the cases considered, the least-core and the nucleolus
do not depend on the value of the threshold.
We conjecture our method can be generalized into dealing with general graphs.
Besides, we establish the relationship between the least-core of a threshold matching
game and the mixed Nash Equilibrium of a non-cooperative two-person zero-sum game,
called {\it matching intercept game}.

The organization of the paper is as follows.
In section \ref{sec:name:preliminaries},
we introduce some concepts in cooperative game theory, and
the definitions of  \emph{threshold matching game} (TMG)
and \emph{threshold cardinality matching game} (TCMG).
%as well as some results about solution concepts of these games.
In section \ref{sec:name:LC}, we discuss the least-core for TCMGs.
Section \ref{sec:name:n1}, section \ref{sec:name:n2} and section \ref{sec:name:n3} are dedicated to
the efficient algorithms for computing the nucleolus of \emph{edge coalitional games} (ECG),
TCMG \emph{on graphs with a perfect matching}, and {TCMG \emph on bipartite graphs}.

%\nocite{*}

    \section{Preliminary and Definition \label{sec:name:preliminaries}}

\subsection{Cooperative Game Theory}\label{sec:cooperativegametheory}

A cooperative game $\Gamma=(N,v)$ consists of a player set
$N=\{1,2,\cdots,n\}$  and a value function $v: 2^N\rightarrow R$
with $v(\emptyset)=0$. $\forall S\subseteq N$,
$v(S)$ represents the profit obtained by $S$ without the help of others.

A game $\Gamma=(N,v)$ is {\em monotone} if $v(S')\leq v(S)$ whenever $S'\subseteq S$.
A simple game is a monotonic game with $v:2^N\rightarrow \{0,1\}$ such that
$v(\emptyset)=0$ and $v(N)=1$. A coalition $S\subseteq N$ is {\em winning}
if $v(S)=1$, and {\em losing} if $v(S)=0$.
A player $i$ is called a {\em veto} player if he belongs to
all winning coalitions. It is easy to see that $i$ is a veto player if and only if
$v(N)=1$ but $v(N\setminus \{i\})=0$.

In cooperative games, we focus on how to distribute the total profit $v(N)$
in a fair or stable way.
Different requirements on fairness and stability
lead to different kinds of distributions, which are generally referred to
{\em{solution concepts}}. In this paper, we focus on three
solutions based on stability:
the {\em{core}}, the {\em{least-core}} and the {\em{nucleolus}}.

Given a cooperative game $\Gamma=(N,v)$, we use $x=(x_1,x_2,\cdots,x_n)$
to represent the payoff vector while $x_i$ is the payoff for player $i$.
For convenience, let $x(S)\triangleq \sum_{i\in S}x_i$. We say a payoff
vector $x$ is an \emph{imputation} of $\Gamma$, if $x(N) =v(N)$
and $\forall i\in N$, $x_i \geq v(\{i\})$ (individual rationality).
%We denote $X(\Gamma)$ as the set of all imputations of the game $\Gamma$.
The {\em core} of $\Gamma$  is defined as:
$${\cal C}(\Gamma):=\{x\in R^n:x(N)=v(N) \ \mbox{and} \ x(S) \geq v(S),
\forall S\subseteq N\}.$$
A  payoff vector in ${\cal C}(\Gamma)$ guarantees that any coalition $S$ can not
get more profit if it breaks away from the grand coalition. This is called
group rationality.

When ${\cal C}(\Gamma) = \emptyset$, there is a nature
relaxation of the core: the {\em least-core}.
Given $\varepsilon\leq 0$, an imputation $x$ is in the \emph{$\varepsilon$-core} of
$\Gamma$, if it satisfies $x(S) \geq v(S)+\varepsilon$ for all
$S \subset N$. Let
$$\varepsilon^*:=\mbox{sup} \{\varepsilon|\varepsilon\mbox{-core of} \
\Gamma \mbox{ is nonempty}\}.$$
The $\varepsilon^*$-core is called the least-core of $\Gamma$,
denoted by  ${\cal {LC}}(\Gamma)$, and the value $\varepsilon^*$ is
called the ${\cal {LC}} (\Gamma)$ value. Obviously,
the optimal solution of the following linear program $LP_1$
is exactly the value and the imputations in ${\cal LC}(\Gamma)$:
$$
LP_1:~\begin{array}{ll}
  \max & \quad \varepsilon  \\
 \mbox{s.t.} & \ \left\{ \begin{array}{ll}
 x(S)\geq v(S)+\varepsilon, \  & \  \forall S\subset N\\
 x_i\geq v(\{i\}), \  & \  \forall i\in N\\
 x(N)=v(N).  \\
\end{array} \right.
\end{array}
$$

Now we turn to the concept of the {\em nucleolus}.
Given any imputation $x$,  the excess of a coalition $S$ under $x$ is
defined as $e(x,S)=x(S)-v(S)$, which can be viewed as the satisfaction degree
of the coalition $S$ under the given $x$. The {\em excess vector} is the vector
$\theta(x)=(e(x,S_1),e(x,S_2),\cdots,e(x,S_{2^n-2}))$, where $S_1,\cdots,S_{2^n-2}$ is a list
of all nontrivial subsets of $N$ that satisfies
$e(x,S_1)\leq e(x,S_2)\leq \cdots \leq e(x,S_{2^n-2})$. The nucleolus of the
game $\Gamma$, denoted by $\eta(\Gamma)$,
is the imputation $x$ that lexicographically maximizes the excess vector $\theta(x)$.

Kopelowitz \cite{Kope:1967}, as well as Maschler, Peleg and Shapley \cite{Maschler:1979}
proposed that
$\eta(\Gamma)$ can be computed by recursively solving the following
sequential linear programs $SLP(\eta(\Gamma))$ $(k=1,2,\cdots)$:

$$LP_k:~\begin{array}{ll}
  \max & \quad \varepsilon  \\
 \mbox{s.t.} & \ \left\{ \begin{array}{ll}
 x(S)=v(S)+ \varepsilon_r, & \  \forall S\in {\cal J}_r \quad r=0,1,\cdots, k-1 \\
 x(S)\geq v(S)+\varepsilon,  & \  \forall  S\in 2^N \setminus\cup_{r=0}^{k-1}{\cal J}_r\\
 x_i\geq v(\{i\}), & \ \forall i\in N\\
 x(N)=v(N).  \\
\end{array} \right.
\end{array}$$
Initially, we set ${\cal J}_0=\{\emptyset, N\}$ and $\varepsilon_0=0$.
The number $\varepsilon_r$ is the optimal value of the $r$-th program $LP_r$,
and ${\cal J}_r=\{S\subseteq N: x(S)=v(S)+\varepsilon_r,\forall x\in X_r\}$,
where $X_r=\{x\in R^n:(x,\varepsilon_r)$ is an optimal solution of $LP_r\}$.
We call a coalition in ${\cal J}_r$ {\em fixed} since its allocation is fixed to a number.
Kopelowitz \cite{Kope:1967} showed that this procedure converges in at most $n$ steps.
Moreover, the nucleolus always exists and it is unique.
When ${\cal C}(\Gamma)\neq \emptyset$, $\eta(\Gamma)\in {\cal C}(\Gamma)$;
and $\eta(\Gamma)\in {\cal {LC}}(\Gamma)$, otherwise.
%when ${\cal C}(\Gamma)= \emptyset$.
%In fact, a solution to the first linear program in the sequence of linear programs
%defining the nucleolus would be a member in the core(or in the least-core) in such cases.

\begin{Proposition} {\emph{\cite{Elkind:2007,Osborne:1994}}}\label{lem:core}
A simple game $\Gamma=(N,v)$ has a nonempty core if and only if
there exists a veto player. Moreover,

(1) $x\in {\cal C}(\Gamma)$ if and only if $x_i=0$ for each $i\in N$
who is not a veto player;

(2) when ${\cal C}(\Gamma)\neq \emptyset$, the nucleolus of $\Gamma$
is given by $x_i=\frac{1}{k}$ if $i$ is a veto player and $x_i=0$ otherwise,
where $k$ is the number of veto players.
\end{Proposition}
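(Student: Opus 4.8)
The plan is to establish the three assertions in turn, using only that $\Gamma$ is monotone with $v$ valued in $\{0,1\}$, together with the elementary reformulation that $i$ is a veto player precisely when every winning coalition contains $i$.

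First I would prove the core characterization and part (1) together. For the forward direction, let $x\in{\cal C}(\Gamma)$; individual rationality gives $x_i\ge v(\{i\})\ge 0$, so $x\ge 0$, and then for any winning coalition $S$ we have $1=v(S)\le x(S)\le x(N)=1$, which forces $x(S)=x(N)$ and hence $x_j=0$ for all $j\notin S$. Thus every $i$ with $x_i>0$ lies in all winning coalitions, i.e. is a veto player; since $x(N)=1>0$ this already shows at least one veto player exists, and also that $x_j=0$ for every non-veto $j$. For the converse, let $R\neq\emptyset$ be the set of veto players, $k=|R|$, and take any imputation $x$ with $x_j=0$ for $j\notin R$ (for instance $x_i=1/k$ on $R$). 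For a losing $S$, $x(S)\ge 0=v(S)$; for a winning $S$ we have $R\subseteq S$, hence $x(S)\ge x(R)=x(N)=1=v(S)$; so $x\in{\cal C}(\Gamma)$. This yields both directions of the main equivalence and of (1); the one subtlety is that if some singleton $\{i\}$ is winning then $R=\{i\}$, so $k=1$ and $x_i=1$, consistent with $x_i\ge v(\{i\})=1$.

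For part (2), I would invoke the fact noted in Section~\ref{sec:cooperativegametheory} that ${\cal C}(\Gamma)\neq\emptyset$ implies $\eta(\Gamma)\in{\cal C}(\Gamma)$; by (1) the nucleolus is then supported on $R$ with coordinates summing to $1$, so it remains to show it is uniform on $R$. I would get this from symmetry: for veto players $i,j$ and any $S$ with $i,j\notin S$, both $S\cup\{i\}$ and $S\cup\{j\}$ omit a veto player and so are losing, giving $v(S\cup\{i\})=v(S\cup\{j\})$; hence the transposition $(i\,j)$ is an automorphism of $\Gamma$. Since the nucleolus is unique and invariant under automorphisms, it assigns a common value to all members of $R$, and summing to $1$ forces that value to be $1/k$.

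The sign and monotonicity manipulations above are routine; the step needing the most care is part (2) — isolating the nucleolus among the (in general many) core allocations. The symmetry argument does this cleanly, and I would present it as the main proof; the more hands-on alternative (for a core allocation supported on $R$ the only $x$-dependent excesses are $x(A)$ with $\emptyset\neq A\subsetneq R$, all others being the constants $0$ or $1$, so the uniform allocation lexicographically dominates by pushing $\min_{i\in R}x_i$ up to $1/k$) needs extra care with ties among the $x_i$ and with the fixed block of zero excesses.
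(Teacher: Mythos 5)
Your proof is correct. Note that the paper does not prove this proposition at all --- it is quoted from the literature (\cite{Elkind:2007,Osborne:1994}) --- so there is no in-paper argument to compare against; judged on its own, your write-up is a sound, self-contained derivation. The forward direction (any core element $x$ is nonnegative and satisfies $x(S)=x(N)$ on every winning $S$, so its support consists of veto players) and the converse (any imputation supported on the veto set $R$ lies in the core, since winning coalitions contain $R$) together give both the nonemptiness criterion and part (1); you are also right to make explicit the implicit restriction to imputations in the ``if'' direction of (1), which the bare statement glosses over. For part (2), the symmetry argument is the clean route: any two veto players are interchangeable because adding either one to a set missing the other still omits a veto player and hence yields a losing coalition, so the transposition is a game automorphism, and uniqueness plus anonymity of the nucleolus (a standard fact, used but not stated in the paper) forces the uniform allocation $1/k$ on $R$. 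Your caveats about the hands-on lexicographic alternative are apt but not needed given the symmetry argument.
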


\subsection{Threshold Matching Games}

We now introduce the definitions of threshold matching games.
For more detailed introduction, please refer to \cite{Aziz:2010, Elkind:2007,Kern:2003}.

Given a graph $G=(V,E)$, a matching $M$ is a set of
edges that no two edges in $M$ have a vertex in common.
The size of $M$ is denoted by $|M|$. A matching is maximum if
its size is maximum among all matchings in $G$. When there is a weight associated with each edge
$w: E\rightarrow R^+$, a matching $M$ is called a maximum
weight matching if its weight $w(M)=\sum_{e\in M}{w(e)}$
is maximum among all matchings.

For a weighted graph $G=(V,E;w)$ and a threshold $T\in R^+$, the corresponding
\emph{threshold matching game} (TMG) is a cooperative game defined as
$\Gamma=(V,w;T)$. We have the player set $V$ and $\forall\ S\subseteq V$,
\begin{center}
$v(S)\triangleq
\begin{cases}
1, & \text{if $w(M)\geq T$, where $M$ is the maximum weight matching of $G[S]$} \\
0, & \text{otherwise}
\end{cases}$
\end{center}
where $G[S]$ is the induced subgraph by $S$ on $G$.

Obviously, TMG is a simple game, and a player $i$ is
a veto player if and only if the weight of the maximum weight matching in
$G[V\setminus \{i\}]$ is less than $T$.
By Proposition~\ref{lem:core}, when there is a veto player, the core and
the nucleolus can be given directly. However, when the core is
empty, the least-core and the nucleolus is hard to compute\cite{Aziz:2010}.

\begin{Proposition}
{\emph{\cite{Aziz:2010}}}\label {general least-core and nucleolus}
Computing the least-core and nucleolus of $TMG$ is NP-hard if the core
of the TMG is empty.
\end{Proposition}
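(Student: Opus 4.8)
The plan is to reduce from \textsc{Partition}, the classical NP-complete problem. The starting point is that a TMG is a simple game, so when it has no veto player (equivalently, its core is empty) the least-core value admits a minimax form: substituting $v(\{i\})=0$ into $LP_{1}$ and using that every winning coalition contains a minimal one,
\[\varepsilon^{*}=-1+\max_{x\in\Delta(V)}\ \min_{S\in\mathcal W_{\min}}x(S),\qquad \Delta(V):=\{x\ge 0:x(V)=1\},\]
where $\mathcal W_{\min}$ is the family of minimal winning coalitions; and since a matching uses only induced edges and adding vertices can only increase $x(S)$, each minimal winning coalition is the vertex set $V(M)$ of a matching $M$ with $w(M)\ge T$. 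So $\varepsilon^{*}$ is a minimax over weight-constrained matchings, and it is the weight constraint that will hide the \textsc{Partition} instance; the same gadget will also make the nucleolus payoff \textsc{Partition}-dependent.

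Given positive integers $a_{1},\dots,a_{n}$ with $\sum_{i}a_{i}=2A$, let $G$ be the vertex-disjoint union of $n$ edges $e_{i}=(p_{i},q_{i})$ of weight $a_{i}$ together with one more disjoint edge $f=(s,t)$ of weight $A$, and set $T=A$. Deleting $s$ or $t$ leaves $e_{1},\dots,e_{n}$ of total weight $2A\ge T$, and deleting any $p_{j}$ or $q_{j}$ leaves $f$ of weight $A=T$, so $G$ has no veto player and the core is empty, as required. As all these edges are globally vertex-disjoint, the maximum weight matching in $G[S]$ has weight $\sum_{i:\{p_{i},q_{i}\}\subseteq S}a_{i}$, plus an extra $A$ when $\{s,t\}\subseteq S$; hence $\mathcal W_{\min}$ consists precisely of $\{s,t\}$ and the sets $\{p_{i},q_{i}:i\in I\}$ for the inclusion-minimal $I\subseteq[n]$ with $\sum_{i\in I}a_{i}\ge A$.

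Put $\beta=x_{s}+x_{t}$ and $c_{i}=x_{p_{i}}+x_{q_{i}}$, so $\beta+\sum_{i}c_{i}=1$. Since only the sums $c_{i}$ matter and, for fixed $\sum_{i}c_{i}=1-\beta$, the largest attainable value of $\min_{I}\sum_{i\in I}c_{i}$ is $(1-\beta)\gamma$ (scale $c_{i}=(1-\beta)\tilde c_{i}$), the minimax above collapses to
\[\varepsilon^{*}+1=\max_{\beta\in[0,1]}\min\bigl(\beta,(1-\beta)\gamma\bigr)=\frac{\gamma}{1+\gamma},\qquad \gamma:=\max_{\tilde c\in\Delta_{n}}\ \min_{I:\sum_{i\in I}a_{i}\ge A}\ \sum_{i\in I}\tilde c_{i},\]
so $\varepsilon^{*}=-1/(1+\gamma)$, where $\gamma$ is just the value of a two-person zero-sum game. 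One checks $\gamma\ge\tfrac12$ always, since $\tilde c_{i}:=a_{i}/(2A)$ gives $\sum_{i\in I}\tilde c_{i}=\tfrac1{2A}\sum_{i\in I}a_{i}\ge\tfrac12$ on every winning $I$. If \textsc{Partition} has a solution $I^{*}$ (so $\sum_{I^{*}}a_{i}=A=\sum_{[n]\setminus I^{*}}a_{i}$), then for every $\tilde c$ we have $\sum_{I^{*}}\tilde c_{i}+\sum_{[n]\setminus I^{*}}\tilde c_{i}=1$, forcing $\gamma\le\tfrac12$; hence $\gamma=\tfrac12$ and $\varepsilon^{*}=-\tfrac23$. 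If \textsc{Partition} has no solution, integrality forces $\sum_{i\in I}a_{i}\ge A+1$ on every winning $I$, so already the choice $\tilde c_{i}=a_{i}/(2A)$ gives $\gamma\ge(A+1)/(2A)>\tfrac12$, whence $\varepsilon^{*}\ne-\tfrac23$. Because the least-core value of any game on $n$ players with $\{0,1\}$-valued $v$ is an $LP$ optimum attained at a vertex of a polytope with $0,\pm1$ coefficients, $\varepsilon^{*}$ is a rational of polynomial bit-size, so any exact computation of $\varepsilon^{*}$ decides \textsc{Partition}: the least-core is NP-hard. For the nucleolus, the $\{s,t\}$-constraint gives $x_{s}+x_{t}\ge\gamma/(1+\gamma)$ while feasibility of the remaining coordinates needs $\bigl(1-(x_{s}+x_{t})\bigr)\gamma\ge\gamma/(1+\gamma)$, so $x_{s}+x_{t}=\gamma/(1+\gamma)$ throughout the least-core; since $\eta(\Gamma)\in\mathcal{LC}(\Gamma)$ and $s,t$ are exchangeable players, $\eta_{s}(\Gamma)=\gamma/\bigl(2(1+\gamma)\bigr)$, which equals $\tfrac16$ exactly when \textsc{Partition} has a solution. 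Hence computing the nucleolus is NP-hard as well.

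The step I expect to be the real obstacle is forcing a single yes/no bit of \textsc{Partition} to produce a detectable change in $\varepsilon^{*}$, which a priori is a ``smooth'' minimax value. That is precisely the job of the two ingredients of the construction: the auxiliary edge $f$ introduces the competing coalition $\{s,t\}$, which pins the optimal mass on $\{s,t\}$ as a strictly increasing function of $\gamma$ and so lets a change in $\gamma$ propagate to $\varepsilon^{*}$ and to the nucleolus; and the self-complementary nature of \textsc{Partition}, together with integrality, is what pins $\gamma$ to exactly $\tfrac12$ for yes-instances and strictly above $\tfrac12$ otherwise. The remaining items — that no spurious minimal winning coalitions occur, that the total mass on $\{s,t\}$ is constant over the least-core, and that the exchangeability argument for the nucleolus is legitimate — are routine but must be verified.
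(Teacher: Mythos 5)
This proposition is imported by the paper from Aziz et al.~\cite{Aziz:2010} without proof, so there is nothing in the text to compare against line by line; your proposal has to stand on its own, and it does. The reduction from \textsc{Partition} is correct: the disjoint-union-of-edges gadget makes every coalition's value depend only on which whole edges it contains, so the minimal winning coalitions are exactly $\{s,t\}$ and the vertex sets of inclusion-minimal winning index sets $I$; the losing-coalition constraints never bind because $\varepsilon^{*}<0\le x(S)$; the two-level minimax collapse to $\varepsilon^{*}=-1/(1+\gamma)$ is valid because any nonnegative profile $(c_i)$ with the prescribed total is realizable by splitting each $c_i$ across $p_i,q_i$; and the yes/no separation of $\gamma$ at $\tfrac12$ via integrality is airtight. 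The nucleolus part is also sound: every least-core point has $x_s+x_t=\gamma/(1+\gamma)$ by the matching upper and lower bounds you derive, $\eta(\Gamma)\in{\cal LC}(\Gamma)$ when the core is empty (as stated in Section~\ref{sec:cooperativegametheory}), and $s,t$ are substitutes, so the equal-treatment property of the nucleolus pins $\eta_s$ to $\gamma/(2(1+\gamma))$. This is the same general strategy as the cited source (hardness driven entirely by the numeric threshold on edge weights, via a subset-sum-type problem), and it usefully makes explicit why the construction is incompatible with the unweighted setting studied in this paper: with unit weights the quantity $\gamma$ degenerates and the threshold carries no number-theoretic information, which is exactly the gap Theorem~\ref{least-core-oracle} exploits. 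Two small points worth stating explicitly if you write this up: the oracle comparison with $-2/3$ is a Turing reduction (NP-hardness of the function problem), and the graph you build is disconnected, which is harmless for the statement but should be acknowledged.
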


In the following we restrict ourselves to
\emph{threshold cardinality matching game} (TCMG) $\Gamma=(V;T)$
based on unweighted graph $G=(V, E)$. That is, $\forall S\subseteq N$,
$v(S)=1$ if the size of a maximum matching in $G[S]$ is no less than
$T$, and $v(S)=0$ otherwise.

Throughout the rest of this paper, we use the following notations:
\vspace{-1mm}
\begin{itemize}
  \item ${\cal M}$ : the set of matchings of $G$;
  \item ${\cal M}_T$ : the set of matchings $M_T$ of $G$ whose sizes are exact $T$, we call $M_T\in {\cal M}_T$ a minimal winning coalition;
  \item ${\cal M}^*$ : the set of matchings $M^*$ of $G$ with maximum size;
  \item $v^*$: the size of the maximum matching of $G$;
  \item $n$: $n=|V|$ is the number of players.
\end{itemize}

Let $G'=(V',E')$ be a subgraph of $G$. We usually use $i\in G'$ instead of $i\in V'$, and
$G'\setminus\{j\}$ instead of $G'[V'\setminus\{j\}]$.
For any imputation $x$, we define $x(G')\triangleq\sum_{i\in V'}x_i$
and $x(G'_{-j})\triangleq x(G'\setminus\{j\})=\sum_{i\in V',i\neq j}x_i$.
Specially, $\forall e=(i,j)\in E$, we let $x(e)=x(\{i,j\})=x_i+x_j$;
and $\forall M\in {\cal M}$, we let $x(M)=\sum_{e\in M}x(e)$.

Before entering into the details, we begin
with Gallai-Edmonds Decomposition of a graph, which plays an
important role in the nucleolus characterization.

\subsection{Gallai-Edmonds Decomposition}\label{matching theory}

Let $G=(V,E)$ be a graph.  A matching of $G$ is called a {\em perfect matching}
if it covers all vertices of $G$, and a {\em nearly perfect matching} if it
covers all vertices except one. $G$ is called \emph{factor-critical}
if removing any vertex of $G$, the rest graph has a perfect matching.

Given $A\subseteq V$, let $G\setminus A$ denote the subgraph induced by
vertices $V\setminus A$. $G\setminus A$ is composed of one or several maximal
connected components (hereinafter referred to as components). A component
of $G\setminus A$ is called \emph{even} (\emph{odd}) if it contains even
(odd) number of vertices. Denote by ${\cal B}={\cal{B}}(A)$ and
${\cal D}={\cal D}(A)$ the set of even components and odd components in
$G\setminus A$, respectively. We use $V(\cal{B})$ ($V(\cal{D})$)
to represent all vertices in even (odd) components.
A set $A\subseteq V$ is called a \emph{Tutte set}
if each maximum matching $M^*$ of $G$ can be decomposed as
$M^*=M_{\cal{B}}\cup M_{A,\cal{D}}\cup M_{\cal{D}}$,
where $M_{\cal{B}}$ induces a perfect matching in any even component
$B\in \cal{B}$, $M_{\cal{D}}$ induces a nearly perfect matching in any odd
component $D\in \cal{D}$, and $M_{A,\cal{D}}$ is a matching which matches
every vertex in $A$  to some vertex in an odd component in $\cal{D}$.
Thus, if $A$ is a Tutte set, the size $v^*$ of a maximum matching in
$G$ satisfies
$$v^*=\sum_{B\in\cal{B}}{\frac{|B|}{2}} + |A| +
\sum_{D\in\cal{D}}{\frac{(|D|-1)}{2}}.$$

\begin{Lemma}[Gallai-Edmonds Decomposition]
{\emph{\cite{Kern:2003,Plummer:1986,West:2011}}}\label{Tutte set}
Given $G=(V,E)$, one can construct a Tutte set $A\subseteq V$
in polynomial time such that
\vspace{-1mm}\begin{enumerate}
  \item all odd components $D\in\cal{D}$ are factor-critical;
  \item $\forall D\in\cal{D}$ there is a maximum matching $M^*$ of $G$
  which does not completely cover $D$ \emph{(}we say $M^*$ leaves $D$ \emph{uncovered}\emph{)}.
\end{enumerate}
\end{Lemma}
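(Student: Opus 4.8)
The plan is to take $A$ to be the classical Gallai--Edmonds set $A(G)$ and then verify, in turn, that $A$ is a Tutte set, that its odd components are factor-critical, and that each of them is avoided by some maximum matching. \textbf{Construction.} Call $v\in V$ \emph{inessential} if some maximum matching of $G$ misses $v$, and let $D$ be the set of all inessential vertices. Put $A:=\{\,u\in V\setminus D : u \text{ is adjacent to } D\,\}$ and $C:=V\setminus(D\cup A)$. By construction no edge joins $C$ to $D$, so the components of $G\setminus A$ are exactly the components of $G[D]$ together with those of $G[C]$. All of this is produced in polynomial time by Edmonds' blossom algorithm: run it to obtain one maximum matching $M^*$ together with its final inner/outer labelling; the vertices labelled outer (after expanding blossoms) are precisely $D$, their neighbours outside $D$ are $A$, and the rest is $C$ --- equivalently, $D$ is the set of vertices reachable from an $M^*$-exposed vertex along an even-length $M^*$-alternating walk, which also shows $D$ is independent of the choice of $M^*$.

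\textbf{Factor-criticality --- the main obstacle.} The heart of the matter is Gallai's lemma: \emph{a connected graph all of whose vertices are inessential is factor-critical.} Granting it, fix a component $D_0$ of $G[D]$; choosing a maximum matching of $G$ that covers as much of $D_0$ internally as possible and pushing alternating paths inside the connected graph $G[D_0]$, one shows that every vertex of $D_0$ is inessential in $G[D_0]$, so Gallai's lemma makes $G[D_0]$ factor-critical and in particular $|D_0|$ odd; hence ${\cal D}$ is precisely the family of components of $G[D]$, which is part~(1). Proving Gallai's lemma is where the real work lies: one argues by induction on the number of vertices, fixes a vertex $v$ and a maximum matching $M$ missing it, and, if $M$ also misses another vertex $u$, follows an $M$-alternating path from $u$; the absence of an $M$-augmenting $u$--$v$ path forces this path to close up into an odd cycle, which is contracted to give a smaller graph still satisfying the hypothesis, and the inductive conclusion is transported back to contradict non-factor-criticality. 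Correctly handling the odd cycles (blossoms) in this exchange argument is the only genuinely non-routine ingredient, and it is exactly what the cited references supply.

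\textbf{Tutte set and part~(2).} With part~(1) established I would use the Tutte--Berge deficiency formula to show that $A$ attains the bound, i.e. $|V|-2v^*=c(G[D])-|A|$, where $c(\cdot)$ counts (odd) components; equivalently, every maximum matching of $G$ exposes exactly $c(G[D])-|A|$ vertices, and they lie in distinct components of $G[D]$. It follows that any $M^*\in{\cal M}^*$ induces a perfect matching on each component of $G[C]$ (so these are the even components ${\cal B}$), a nearly perfect matching on each component of $G[D]$, and a matching sending every vertex of $A$ to a distinct odd component --- that is exactly the defining decomposition of a Tutte set, together with the size identity displayed before the lemma. For part~(2), fix $D_0\in{\cal D}$: in the bipartite graph obtained from $G$ by contracting each odd component to a single vertex and deleting $C$, the definition of $D$ guarantees that every nonempty $A'\subseteq A$ has strictly more than $|A'|$ neighbours, so deleting the vertex representing $D_0$ keeps Hall's condition and there is a matching saturating $A$ that avoids that vertex; lifting it --- completing every other odd component to a nearly perfect matching by factor-criticality, $D_0$ to a nearly perfect matching leaving one chosen vertex exposed, and $G[C]$ to a perfect matching --- yields a maximum matching of $G$ that leaves $D_0$ uncovered. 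All the objects used are delivered by the blossom algorithm in polynomial time, which completes the argument; Gallai's lemma in the second step is the sole point requiring care.
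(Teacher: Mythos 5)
The paper does not prove this lemma at all: it is stated as a classical result and attributed to \cite{Kern:2003,Plummer:1986,West:2011}, so there is no ``paper proof'' to match against. What you have written is a correct outline of the standard textbook proof of the Gallai--Edmonds structure theorem: take $D$ to be the inessential vertices, $A$ its external neighbourhood, reduce factor-criticality of the components of $G[D]$ to Gallai's lemma, and certify the Tutte-set property by showing $A$ attains the Tutte--Berge deficiency. Two places deserve a caveat. First, the claim that every vertex of a component $D_0$ of $G[D]$ is inessential \emph{within} $G[D_0]$ is not just a matter of ``pushing alternating paths''; the clean route is the stability lemma ($D(G-a)=D(G)$ for $a\in A$), and this is exactly the kind of blossom-handling work you rightly flag as the non-routine core. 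Second, your assertion that positive surplus of the auxiliary bipartite graph follows ``from the definition of $D$'' is too quick --- it is a consequence of the structure of maximum matchings you establish in the Tutte-set step (take $v$ in a component $D_1\in N_H(A')$, a maximum matching missing $v$ cannot send any $A$-vertex into $D_1$, so $A'$ is matched injectively into $N_H(A')\setminus\{D_1\}$), not of the definition alone. In fact, once the Tutte-set structure is in hand, part (2) follows more directly without Hall's theorem: for any $v\in D_0$ a maximum matching missing $v$ must induce a near-perfect matching of $D_0$ avoiding $v$ and cannot also match a vertex of $A$ into $D_0$, so it already leaves $D_0$ not completely covered. With those two points tightened, your argument is a complete and correct proof of the cited lemma.
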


In the following we assume that $A\subseteq V$ is a fixed Tutte set satisfying
the condition 1 and 2 in Lemma \ref{Tutte set}. Thus, for such a Tutte set $A$,
we have the following facts:

1) For any $D\in\cal{D}$ and any maximum matching $M^*$,
$M^*$ matches at most one vertex in $A$ to $D$, hence, $|A|\leq |\cal{D}|$.
And when $\cal{D}\neq \emptyset$, $|A|<|\cal{D}|$.

2) For any Tutte set $A$ satisfying conditions in Lemma \ref{Tutte set},
the size of $A\cup V(\cal{B})$ is fixed which does not change with
the different choice of $A$.

    \section{Least-core of TCMG\label{sec:name:LC}}

In this section we firstly show that the least-core
problem for TCMG is much easier than that for TMG by providing a
polynomial time algorithm. Besides, we introduce a non-cooperative two-player zero-sum
game, called matching intercept game. There is a close relationship
between the mixed Nash equilibrium of this game and the least-core of TCMG.

Throughout this section, let $\Gamma=(V;T)$ be the TCMG defined on an
unweighted graph $G=(V,E)$ with threshold $T: 1\leq T \leq v^*$. Since
both testing the core nonemptiness and finding a core member can be done
efficiently, we focus on the case where ${\cal C}(\Gamma)=\emptyset$,
i.e., there is no veto players in $\Gamma$.

\subsection{Least-core}\label{generl-leastcore-reduce-result}\label{oracle}

For a TCMG $\Gamma=(V;T)$, denote ${\cal E}(\Gamma)=\{\{i\}:i\in V\}\cup \{\{M_T:M_T\in {\cal M}_T\}\}\cup \{V\}$.
We call a coalition $S\in {\cal E}(\Gamma)$ {\em essential coalition}.
We can show that for $\Gamma$, the least-core and nucleolus can be determined completely by the essential coalitions.

Suppose $S\subseteq V$ is a coalition of $\Gamma$.
If $S$ wins, then $S$ contains a minimal winning coalition $M_T\in {\cal M}_T$, and
$$x(S)-v(S)=x(M_T)-1+\sum_{i\in S\backslash V(M_T)}{x_i}\geq x(M_T)-1.$$
Since $x_i\geq 0$ for all $i\in V¡ä$, $S$ cannot be fixed before $M_T$ or any $i\in S\backslash V(M_T)$.
After $M_T$ and all $i\in S\backslash V(M_T)$ are fixed, $S$ is also fixed, i.e. coalitions like $S$ are redundant;
If $S$ loses, we can decompose $S$ in the following way:
$$x(S)-v(S)=\sum_{i\in V(S)}{x_i}\geq x_i,i\in S.$$
We can conclude that $S$ cannot be fixed before any $i\in V(S)$ through the same analysis. When all $i\in V(S)$ are fixed,
$S$ is fixed, i.e. $S$ is also redundant in this case.

From above analysis, we can conclude that the least-core ${\cal LC}(\Gamma)$ of TCMG can be
characterized as the optimal solution of the following linear program
$LP_1^T$:
$$
LP_1^T:~\begin{array}{ll}
  \max &  \varepsilon  \\
 \mbox{s.t.} & \ \left\{ \begin{array}{ll}
   x(M_T)\geq 1+ \varepsilon, \ & ~\forall M_T\in {\cal M}_T\\
   x(V)=1 \\
   x_i\geq 0, \ & ~i=1,2,\cdots,n.\\
\end{array} \right.
\end{array}
$$
We can obtain the same result in terms of the nucleolus, i.e., we have the following proposition.

\def\essentialcoalition{
Dropping the constraints associated with all $S\notin {\cal E}(\Gamma)$
will not change the result of $LP_1$ and $SLP(\eta(\Gamma))$.
}
\begin{Proposition}
    \label{essentialcoalitionpro}
    \essentialcoalition
\end{Proposition}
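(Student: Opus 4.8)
The plan is to show that every coalition $S \notin \mathcal{E}(\Gamma)$ is \emph{redundant} in the sequential linear programs $SLP(\eta(\Gamma))$, in the precise sense that its excess constraint $x(S) \ge v(S) + \varepsilon$ is never the binding (tight) constraint that determines the optimal value $\varepsilon_k$ at any stage $k$ \emph{before} $S$ would automatically become fixed anyway. Since $LP_1$ is the first stage $LP_1$ of this sequence, the claim for $LP_1$ follows as a special case. The argument splits according to whether $S$ is winning or losing, exactly mirroring the informal discussion preceding the proposition, but carried out uniformly across all stages $r = 1, 2, \dots, n$ rather than only at the first stage.

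First I would treat a winning coalition $S$ with $S \notin \mathcal{E}(\Gamma)$ (so $S \neq V$ and $S$ is not itself some $M_T \in \mathcal{M}_T$). Then $S$ properly contains a minimal winning coalition: there is $M_T \in \mathcal{M}_T$ with $V(M_T) \subsetneq S$, so $S \setminus V(M_T) \neq \emptyset$. Because all individual payoffs satisfy $x_i \ge v(\{i\}) = 0$ (the individual-rationality constraints are present in every $LP_k$), we have the identity
$$e(x,S) = x(S) - 1 = \bigl(x(M_T) - 1\bigr) + \sum_{i \in S \setminus V(M_T)} x_i = e(x, M_T) + \sum_{i \in S \setminus V(M_T)} e(x, \{i\}) \ge e(x, M_T).$$
Hence for \emph{every} imputation $x$, the excess of $S$ is at least the excess of $M_T$, so $S$ can never be among the coalitions attaining the minimum excess at a stage where $M_T$ is still unfixed; thus dropping $S$'s constraint does not alter $\varepsilon_k$. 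Once $M_T$ and each singleton $\{i\}$, $i \in S \setminus V(M_T)$, have been placed in some $\mathcal{J}_r$ — which must happen by termination of the procedure since every essential coalition eventually gets fixed — the displayed identity shows $x(S)$ is a fixed linear combination of already-fixed values, so $S$ joins some $\mathcal{J}_r$ automatically; its constraint was therefore never needed. The case of a losing coalition $S$ is analogous and simpler: $v(S) = 0$, and picking any $i \in S$ gives $e(x,S) = x(S) \ge x_i = e(x,\{i\})$, so $S$ can never strictly beat the singleton $\{i\}$ to the minimum, and once all singletons in $S$ are fixed, $x(S)$ is fixed and $S$ joins some $\mathcal{J}_r$ automatically.

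To make this rigorous one should argue by induction on the stage $k$ that the sequence of optimal values $(\varepsilon_1, \varepsilon_2, \dots)$ and the sequence of fixed families $(\mathcal{J}_1, \mathcal{J}_2, \dots)$ produced by $SLP(\eta(\Gamma))$ coincide with those produced by the reduced program that only ever lists constraints for coalitions in $\mathcal{E}(\Gamma)$: at each step the extra constraints are dominated (pointwise, for every feasible $x$) by constraints already in the reduced system, so they cannot lower the max; and a coalition in $\mathcal{E}(\Gamma)$ that becomes fixed in the reduced program also becomes fixed in the full program (same $X_r$), while a non-essential $S$ becomes fixed in the full program no later than the first stage after all of its "witnessing" essential coalitions are fixed. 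Because the nucleolus is the unique imputation whose excess vector is lexicographically maximized and this vector is determined by the successive $(\varepsilon_r, \mathcal{J}_r)$, equality of these sequences yields the same nucleolus, and in particular the same $\mathcal{LC}(\Gamma)$ at $k=1$.

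The main obstacle is the bookkeeping in the inductive step: one must verify that a non-essential coalition never "sneaks in" as a \emph{strictly} tight constraint at some intermediate stage before its witnesses are fixed — i.e., that the inequality $e(x,S) \ge e(x,M_T)$ (resp. $e(x,S)\ge e(x,\{i\})$) is enough to keep $S$ off the list of minimizers, rather than merely off the list of unique minimizers. This is where one uses that on $X_r$ the fixed coalitions are exactly those tight for \emph{all} optimal $x$, so if $S$ were tight for all optimal $x$ then so would its witness be (by the pointwise inequality and $x_i \ge 0$), contradicting that the witness is not yet fixed — unless the witness is about to be fixed at this very stage, in which case $S$ is fixed too and everything is consistent. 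Handling the edge case where $S$ and its witness get fixed simultaneously is the one place that needs a little care, but it is precisely the situation covered by the automatic-fixing observation above.
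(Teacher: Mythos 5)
Your proposal is correct and follows essentially the same route as the paper: the paper's own (informal) argument, given in the text just before the proposition, uses exactly your decomposition of the excess of a non-essential winning coalition into the excess of a contained $M_T\in\mathcal{M}_T$ plus singleton excesses, and of a losing coalition into singleton excesses, to conclude that non-essential constraints are dominated and become fixed automatically. Your inductive bookkeeping across the stages of $SLP(\eta(\Gamma))$, including the simultaneous-fixing edge case, is in fact carried out more carefully than in the paper.
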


Therefore, when the threshold $T$ is a fixed number
independent of the input size, the size of the linear programs in $LP_1$ and $SLP(\eta(\Gamma))$
are all polynomial. It follows that the least-core and the nucleolus can be computed efficiently.
However, when the threshold $T$ is relevant to the input size,
the difficulty on computing is that each linear program in $LP_1$ and $SLP(\eta(\Gamma))$ remains
exponential size and we lack the understanding of the composition of ${\cal J}_r$.
In the following sections, we assume the threshold $T$ is relevant to the input size.

We firstly show that least-core can be solved efficiently
by ellipsoid method with a polynomial time separation oracle.

\def\leastcoreoracle{
Let $\Gamma=(V;T)$ be a TCMG with empty core. Then the problems of computing
the ${\cal LC}(\Gamma)$ value, finding a ${\cal LC}(\Gamma)$ member and checking
if an imputation is in ${\cal LC}(\Gamma)$ are all polynomial time solvable.
}
\begin{Theorem}\label{least-core-oracle}
     \leastcoreoracle
\end{Theorem}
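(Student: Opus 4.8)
The plan is to solve $LP_1^T$ using the ellipsoid method, which only requires a polynomial-time separation oracle. Fix a candidate pair $(x,\varepsilon)$ with $x\ge 0$ and $x(V)=1$. The nontrivial constraints are $x(M_T)\ge 1+\varepsilon$ for every $M_T\in{\cal M}_T$, so the separation problem reduces to asking whether $\min\{x(M_T): M_T\in{\cal M}_T\}\ge 1+\varepsilon$. First I would observe that, since all $x_i\ge 0$, a matching of size exactly $T$ of minimum $x$-weight can be found by the following procedure: give each edge $e=(i,j)$ the weight $x(e)=x_i+x_j$, and compute a minimum-weight matching of cardinality exactly $T$ in $G$. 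This is the classical minimum-weight $T$-cardinality matching problem, which is solvable in polynomial time (e.g.\ via a weighted-matching/LP-duality argument, or by adding a large constant to each edge weight so that maximum-weight matchings of size $v^*$ become the relevant objects and truncating, or directly by Edmonds' weighted matching machinery with a cardinality constraint). If the minimum weight is $\ge 1+\varepsilon$, the point is feasible; otherwise the minimizing $M_T$ yields a violated inequality. This gives the separation oracle, hence a polynomial-time algorithm to compute $\varepsilon^*={\cal LC}(\Gamma)$ value and to find a least-core member.

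Next I would handle \emph{verification}: given an imputation $x$, decide whether $x\in{\cal LC}(\Gamma)$. By Proposition~\ref{essentialcoalitionpro}, membership in the least-core is equivalent to $x\ge 0$, $x(V)=1$, and $x(M_T)\ge 1+\varepsilon^*$ for all $M_T\in{\cal M}_T$, i.e.\ $\min_{M_T\in{\cal M}_T}x(M_T)\ge 1+\varepsilon^*$. Having already computed $\varepsilon^*$ as above, I run the same minimum-weight $T$-cardinality matching oracle once more on the weights $x(e)$ and compare the optimum with $1+\varepsilon^*$; the imputation lies in ${\cal LC}(\Gamma)$ exactly when equality-or-above holds (and the trivial constraints are met). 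So verification is also polynomial.

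One subtlety I would address explicitly: the feasibility ``checking'' for the ellipsoid method on $LP_1^T$ needs the oracle to work for \emph{arbitrary} rational $x\ge 0$, not just imputations, and to certify violations with a matching $M_T$ of the correct size $T$ — in particular $G$ must contain at least one matching of size $T$, which holds because $1\le T\le v^*$. I would also note that $\varepsilon^*\le 0$: taking $x$ with all mass on a single vertex not saturated by some $M_T$ shows the optimum is nonpositive, consistent with the definition of the $\varepsilon$-core, and the core being empty guarantees $\varepsilon^*<0$ so the least-core is a proper relaxation.

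The main obstacle is the separation step: one must be sure that ``minimum-weight matching of cardinality exactly $T$'' is genuinely polynomial-time solvable on general (non-bipartite) graphs with nonnegative edge weights. I would invoke the standard reduction to Edmonds' weighted perfect matching — for instance, pad each edge weight by a large constant $C$ so that among all matchings the ones of maximum total weight are precisely those of size $v^*$, then more carefully set up a parametric family (or directly use the known result that for each $k\le v^*$ a minimum/maximum weight matching with exactly $k$ edges can be extracted from the matching polytope in polynomial time, since the convex hull of matchings is polyhedral with an efficient separation oracle). Once this combinatorial fact is in hand, everything else is routine bookkeeping with the ellipsoid method, and the theorem follows. $\hfill\square$
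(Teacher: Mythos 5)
Your proposal is correct and follows essentially the same route as the paper: the ellipsoid method applied to $LP_1^T$ with a separation oracle that assigns each edge $e=(i,j)$ the cost $x_i+x_j$ and computes a minimum-cost matching of cardinality exactly $T$. The extra care you take with the fixed-cardinality matching subroutine and the verification step only fills in details the paper leaves implicit.
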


\begin{proof}
A polynomial time separation oracle for $LP_1^T$ is as follows.
Let $(x,\varepsilon)$ be a candidate solution for $LP_1^T$.
Setting edge cost $\widetilde{c}(e)=x_i+x_j$ $(\forall e=(i,j)\in E)$
on the edge set of graph $G = (V,E)$.
Then we compute the minimum cost matching $M$ of size $T$.
If $x(M)\geq 1+\varepsilon$, then $(x,\varepsilon)$ is a feasible solution;
otherwise, the inequality $x(M)\geq 1+\varepsilon$ is a violated constraint.
Here, computing the minimum cost matching of fixed size can be done in polynomial time.\qed
\end{proof}

The above theorem shows that finding an imputation in the least-core
of TCMG is not as hard as that of the general case TMG.

In the following, we further provide a characterization of the least
core of TCMG under some conditions. Denote ${\cal M}_T=
\{M_1,M_2,\cdots,M_m\}$ be the set of all matchings whose sizes are exact $T$,
and let $a_1,a_2,\cdots,a_m$ be the
indicator vectors of the matchings in ${\cal M}_T$.
Consider the dual program of $LP_1^T$:

$$
DLP_1^T:~\begin{array}{ll}
  \min & \quad \delta-1  \\
  \mbox{s.t.} & \ \left\{ \begin{array}{ll}
  \displaystyle\sum_{M_j:i\in M_j}y_j\leq \delta,  \ &~i=1,2,\cdots,n\\
  \sum_{j=1}^{m}{y_j}=1 \\
  y_j\geq 0, \ &~j=1,2,\cdots,m.\\
\end{array} \right.
\end{array}
$$
Followed from the duality theorem, $DLP_1^T$ has the same optimal
value as $LP_1^T$. Hence, we have the following result which is quite
useful in the algorithm design for the nucleolus in next sections.

\def\mainresult{
Let $\Gamma=(V;T)$ be a TCMG with empty core.
If $(\frac{2T}{n},\cdots,\frac{2T}{n})_n$ is a convex combination of
$a_1,a_2\cdots,a_m$, then
\begin{enumerate}
  \item the value of ${\cal LC}(\Gamma)$ is $\varepsilon=\frac{2T}{n}-1$;
  \item $(\frac{1}{n},\cdots,\frac{1}{n})_n \in {\cal LC}(\Gamma)$;
  \item if there exists a convex combination and the coefficient corresponding to $a_i$ is strictly greater than zero, we have that the $i$-th constraint belongs to ${\cal J}_1$ in $SLP(\eta(\Gamma))$.
\end{enumerate}

}
\begin{Theorem}\label{general-T-dual-epsilon}
\mainresult
\end{Theorem}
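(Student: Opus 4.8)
The plan is to exploit LP duality between $LP_1^T$ and $DLP_1^T$ together with the hypothesis that the uniform-in-coordinates point $(\tfrac{2T}{n},\dots,\tfrac{2T}{n})$ lies in $\mathrm{conv}\{a_1,\dots,a_m\}$. Write $\sum_{j} \lambda_j a_j = (\tfrac{2T}{n},\dots,\tfrac{2T}{n})$ with $\lambda_j \ge 0$, $\sum_j \lambda_j = 1$. First I would check that this gives a feasible solution of $DLP_1^T$: take $y_j = \lambda_j$ and $\delta = \tfrac{2T}{n}$; then for each vertex $i$, $\sum_{M_j : i \in M_j} y_j$ is exactly the $i$-th coordinate of $\sum_j \lambda_j a_j$, which equals $\tfrac{2T}{n} = \delta$, so all $n$ constraints hold with equality, $\sum_j y_j = 1$, and $y_j \ge 0$. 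Hence the optimal value of $DLP_1^T$ is at most $\delta - 1 = \tfrac{2T}{n} - 1$. For the matching lower bound, I would produce a primal feasible solution of $LP_1^T$ achieving the same value: take $x = (\tfrac1n,\dots,\tfrac1n)$ and $\varepsilon = \tfrac{2T}{n} - 1$. Then $x(V) = 1$, $x_i \ge 0$, and for every $M_T \in \mathcal{M}_T$ we have $x(M_T) = 2T \cdot \tfrac1n = \tfrac{2T}{n} = 1 + \varepsilon$, so all the matching constraints are satisfied (with equality, in fact). Thus $LP_1^T$ has a feasible solution of value $\tfrac{2T}{n}-1$, so its optimum is at least that; combined with weak duality and the bound on $DLP_1^T$, the common optimal value is exactly $\varepsilon^* = \tfrac{2T}{n} - 1$. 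This proves part 1, and since $(\tfrac1n,\dots,\tfrac1n)$ is a feasible solution of $LP_1^T$ attaining the optimum $\varepsilon^*$, it lies in $\mathcal{LC}(\Gamma)$, giving part 2.

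For part 3, I would argue via complementary slackness. Fix any convex combination $\sum_j \lambda_j a_j = (\tfrac{2T}{n},\dots,\tfrac{2T}{n})$ and suppose $\lambda_i > 0$ for some index $i$. I want to show the constraint $x(M_i) \ge 1 + \varepsilon$ is tight for \emph{every} optimal solution $(x,\varepsilon)$ of $LP_1^T$, i.e. $M_i \in \mathcal{J}_1$. Take any optimal primal $(x,\varepsilon)$ with $\varepsilon = \varepsilon^* = \tfrac{2T}{n}-1$, and use the dual-optimal solution $y = \lambda$, $\delta = \tfrac{2T}{n}$ constructed above. By complementary slackness between this primal and this dual optimum, $y_j > 0$ forces the $j$-th primal constraint $x(M_j) \ge 1 + \varepsilon$ to hold with equality. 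Since $\lambda_i = y_i > 0$, we get $x(M_i) = 1 + \varepsilon^*$ for this particular optimal $x$. To upgrade from "some optimal $x$" to "all optimal $x$", I would instead run the complementary-slackness argument with an arbitrary optimal $x$ paired against the \emph{same} fixed dual optimum $y = \lambda$: complementary slackness holds for any pair of optimal solutions of a primal-dual pair, so for \emph{every} optimal $x$ we get $x(M_i) = 1 + \varepsilon^*$. By the definition of $\mathcal{J}_1$ as $\{S : x(S) = v(S) + \varepsilon_1 \text{ for all optimal } x\}$, this shows $M_i \in \mathcal{J}_1$.

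The one subtlety I expect to need care is the translation between the abstract sequential LP $SLP(\eta(\Gamma))$, which ranges over all coalitions $S \subset N$, and the reduced program $LP_1^T$, whose constraints are indexed only by matchings of size exactly $T$. Proposition~\ref{essentialcoalitionpro} already tells us that dropping the non-essential constraints changes neither $LP_1$ nor $SLP(\eta(\Gamma))$, so the first-round optimal value $\varepsilon_1$ coincides with the optimum of $LP_1^T$ and the set of optimal $x$'s is the same; the coalition $M_i$ viewed as an essential coalition $V(M_i)$ is winning with $v(V(M_i)) = 1$, and $x(V(M_i)) = x(M_i)$ since $M_i$ covers $2T$ vertices, so the identity $x(M_i) = 1 + \varepsilon_1$ is literally the statement that $V(M_i)$ has excess $\varepsilon_1$. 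I would state this correspondence once at the start and then the rest is the duality bookkeeping above; no step requires more than weak/strong LP duality and the fact that a minimum-cost matching of fixed cardinality is polynomial-time computable (already used in Theorem~\ref{least-core-oracle}, and not even needed here since we only need existence of the combination, which is the hypothesis).
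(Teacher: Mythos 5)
Your proof is correct and takes essentially the same route as the paper: the convex combination yields a dual-feasible solution of $DLP_1^T$ with value $\frac{2T}{n}-1$, the uniform imputation is primal-feasible with the same value, LP duality gives optimality, and complementary slackness against the fixed dual optimum $y=\lambda$ gives tightness of the $i$-th constraint for every primal optimum, hence membership in ${\cal J}_1$. Your added care about ``all optimal $x$'' versus ``some optimal $x$'' and the translation to essential coalitions via Proposition~\ref{essentialcoalitionpro} only makes explicit what the paper leaves implicit.
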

\begin{proof}

The condition $(\frac{2T}{n},\cdots,\frac{2T}{n})_n$ is the convex
combination of $a_j$ $(j=1,2,\cdots,m)$ is equivalent to the fact that
there is a feasible solution to $DLP_1^T$ with the objective function value
being $\frac{2T}{n}-1$. On the other hand, it is easy to check $(x,\varepsilon)=
(\frac{1}{n},\cdots,\frac{1}{n},\frac{2T}{n}-1)$ is feasible to
$LP_1^T$. Followed from duality theorem of LP, $(x,\varepsilon)=
(\frac{1}{n},\cdots,\frac{1}{n},\frac{2T}{n}-1)$ is an optimal solution
of $LP_1^T$, yielding that $(\frac{1}{n},\cdots,\frac{1}{n})_n
\in {\cal LC}(\Gamma)$. By Complementary Slackness Theorem, we know if $a_i>0$, the $i$-th constraint is tight in each optimal solution. Thus, it  belongs to ${\cal J}_1$ in $SLP(\eta(\Gamma))$.
\qed
\end{proof}

\subsection{Matching Intercept Games}\label{MIG}
%the corresponding
%\emph{Vertex Matching Noncooperative Game(VMNG)} is a noncooperative game with
Now we define a non-cooperative zero-sum game on graph $G=(V,E)$
with two players, the \emph{interceptor} and the \emph{matcher}.
The pure strategy set of the interceptor is $V$ and
the pure strategy set of the matcher is ${\cal M}_T$.
%which contains all matchings whose sizes are exactly $T$.
If the vertex chosen by the interceptor intersects with the
matching chosen by the matcher, then the interceptor wins and gets payoff $1$;
%while the matcher loses and gets payoff $-1$.
otherwise, he loses and gets payoff $0$.
%and the matcher gets payoff $0$.
We call this non-cooperative game \emph{matching intercept game(MIG)}.

For the notion of {\em mixed Nash Equilibrium}, players select strategies
at random and act to maximum their own expected profit. In MIG,
let $p=(p_1,p_2,\cdots,p_n)$ be the interceptor's probability
distribution over his pure strategies, then based on Maxmin Theorem, the
optimal solution of the following linear program gives
the mixed Nash Equilibrium of MIG:
$$
\widetilde{LP}_1^T:~\begin{array}{ll}
  \max & \   \alpha  \\
 \mbox{s.t.} & \ \left\{ \begin{array}{ll}
   p(M_T)\geq  \alpha, \ & ~\forall M_T\in {\cal M}_T\\
   p(V)=1 \\
   p_i\geq 0, \ & ~i=1,2,\cdots,n.\\
\end{array} \right.
\end{array}
$$
It is obvious that linear program $\widetilde{LP}_1^T$ is equivalent to
linear program $LP_1^T$. Thus, for TCMG and MIG defined on the same graph $G$,
the least-core of TCMG is the same as the mixed Nash Equilibrium
of MIG.

%\subsection{Nucleolus}
%Section \ref{oracle} shows  show that we can perfectly solve the least
%core problem for TCMG. However, when the core is empty, the
%computing of the nucleolus is much more difficult.
%
%
%
%
%
%\begin{proof}
%If $S\subseteq V$ wins, then $S$ contains a minimal winning coalition $M_T\in {\cal M}_T$, and
%$$x(S)-v(S)=x(M_T)-1+\sum_{i\in S\backslash V(M_T)}{x_i}\geq x(M_T)-1.$$
%Since $x_i\geq 0$ for all $i\in V¡ä$, $S$ cannot be fixed before $M_T$ or any $i\in S\backslash V(M_T)$.
%After $M_T$ and all $i\in S\backslash V(M_T)$ are fixed, $S$ is also fixed, i.e. coalitions like $S$ are redundant.
%
%If $S\subseteq V$ loses, we can decompose $S$ in the following way:
%$$x(S)-v(S)=\sum_{i\in V(S)}{x_i}\geq x_i,i\in S.$$
%The analysis is similar, $S$ cannot be fixed before any $i\in V(S)$. When all $i\in V(S)$ are fixed,
%$S$ is fixed, i.e. $S$ is also redundant in this case.
%\qed
%\end{proof}

In the following sections, we focus on the following three typical cases of TCMG to
give a clearer characterization on the nucleolus:
\vspace{-1mm}
\begin{itemize}
  \item Edge Coalitional Games (ECG): $T=1$ (section~\ref{sec:name:n1});
  \item TCMG on graphs with a perfect matching (section~\ref{sec:name:n2});
  \item TCMG on bipartite graphs (section~\ref{sec:name:n3}).
\end{itemize}
We aim to show that $\eta(\Gamma)$ is completely determined by edge coalitions (Coalition which contains exact two vertices and these two vertices form a edge in $G$ is called edge coalition.),
single player coalitions (which contains only one player)
as well as grand coalition rather than essential coalitions.

\section{Edge Coalitional Games (T=1)\label{sec:name:n1}}

In this section, we consider the edge coalitional game (ECG)
$\Gamma^1=(V;1)$ defined on an unweighted graph $G=(V,E)$, i.e., the TCMG with threshold $T=1$.
If there is a 0-degree vertex in $G$, then it has no contribution to any coalitions, i.e.,
its allocation must be 0. In the following, we assume that there is no 0-degree vertices in $G$.
It is easy to see that ${\cal C}(\Gamma^1)\neq \emptyset$ if and only if
there exists a vertex $i\in V$ such that there is no edges in $G\setminus \{i\}$,
that is, the graph $G$ is a star-like graph.

When ${\cal C}(\Gamma^1)=\emptyset$, the linear program for ${\cal LC}(\Gamma^1)$
is as follows:
$$
LP_1^1:~\begin{array}{ll}
  \max & \quad \varepsilon  \\
 \mbox{s.t.} & \ \left\{ \begin{array}{ll}
  x_i+x_j\geq 1+ \varepsilon, &~\forall e=(i,j)\in E\\
  x(V)=1 \\
  x_i\geq 0,&~ i=1,2,\cdots,n.\\
\end{array} \right.
\end{array}
$$

According to Gallai-Edmonds Decomposition, every graph can be decomposed into
$A$, $\cal B$, $\cal D$. Let ${\cal D}_0$ be  the set of singletons in $\cal D$
(${\cal D}_0$ may be empty).
%if so, we skip the following step
%and directly go to the proof of Proposition \ref{general-leastcore-T=1}.
Since ${\cal D}_0$ is the set of singletons, we ambiguously use ${\cal D}_0$ instead of $V({\cal D}_0)$.
Let $G_0$ be a bipartite graph  with vertex set $A \cup {\cal D}_0$ and edge set
consisting of edges with two endpoints in $A$ and ${\cal D}_0$ separately.
Find a maximum matching $M_0$ in $G_0$. Denote the matched vertices
in $A$ and ${\cal D}_0$ by $A_1$ and ${\cal D}_{01}$ with respect to $M_0$.
Let $A_2 = A\setminus A_1$ and ${\cal D}_{02} = {\cal D}\setminus {\cal D}_{01}$.

We firstly consider the simple case that ${\cal D}_{02}=\emptyset$, that is,
all singletons in $\cal D$ can be matched to $A$. By making use of
Theorem \ref{general-T-dual-epsilon}, we show that in this case,
the least-core value and an imputation in the least-core can be obtained directly.
Then we generalize the result into the case  ${\cal D}_{02}\neq \emptyset$.

\def\edgeproposition{
Given an ECG $\Gamma^1=(V;1)$, if ${\cal D}_{02}=\emptyset$, then the value
of ${\cal LC}(\Gamma^1)$ is $\varepsilon=\frac{2}{n}-1$ and $(\frac{1}{n},\cdots,\frac{1}{n})_n\in{\cal LC}(\Gamma)$.
}

\begin{Theorem}\label{general-leastcore-T=1}
\edgeproposition
\end{Theorem}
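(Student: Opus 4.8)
The plan is to apply Theorem~\ref{general-T-dual-epsilon} with $T=1$, so it suffices to show that the uniform vector $(\frac{2}{n},\ldots,\frac{2}{n})_n$ is a convex combination of the indicator vectors $a_1,\ldots,a_m$ of the size-one matchings, i.e.\ of the edge indicator vectors $\chi_e$ for $e\in E$. Concretely, I want nonnegative weights $y_e$ with $\sum_{e\in E} y_e = 1$ and $\sum_{e\ni i} y_e = \frac{2}{n}$ for every vertex $i$; equivalently, a fractional perfect matching in which every vertex has total weight $\frac{2}{n}$, scaled so the weights sum to $1$ (note $\sum_i \frac{2}{n} = 2 = 2\sum_e y_e$, which is the right normalization since each edge is counted at two vertices). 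So the real content is: under the hypothesis ${\cal D}_{02}=\emptyset$ and ${\cal C}(\Gamma^1)=\emptyset$, the all-$\frac{2}{n}$ vector lies in the fractional perfect matching polytope of $G$.

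The key steps, in order. First, invoke the Gallai--Edmonds structure: $V = A \cup V({\cal B}) \cup V({\cal D})$, each $B\in{\cal B}$ has a perfect matching, each $D\in{\cal D}$ is factor-critical, and a maximum matching matches $A$ into distinct odd components. The hypothesis ${\cal D}_{02}=\emptyset$ says every singleton odd component is matched (in $M_0$) to a distinct vertex of $A$; call the remaining vertices of $A$ (those matched by $M_0$ to non-singleton odd components, or unmatched in $G_0$) the set $A_2$. Second, build the fractional matching as a weighted sum of combinatorial structures: (a) on each even component $B$, put a uniform fractional matching giving weight $\frac1{|B|}\cdot\frac{2|B|}{?}$\dots more carefully, on any graph with a perfect matching one can average the perfect matching(s) but a cleaner route is: a factor-critical graph on $k$ vertices has, for each vertex $u$, a near-perfect matching missing $u$; averaging these $k$ near-perfect matchings uniformly gives a fractional matching assigning weight exactly $\frac{k-1}{k}$ to every vertex, hence after rescaling we can make every vertex of an odd component carry equal weight. (c) The vertices of $A$ and the singletons ${\cal D}_0$ are handled by the maximum matching $M_0$ in the bipartite graph $G_0$ together with factor-critical near-perfect matchings: since ${\cal D}_{02}=\emptyset$, every singleton is covered, and every vertex of $A_2$ can be routed into a (non-singleton) odd component, where factor-criticality lets us absorb it. Third, assemble these pieces with coefficients tuned so that the per-vertex total is the same constant everywhere; this is a finite linear system with one unknown scaling per component plus the matching $M_0$, and the Gallai--Edmonds identities guarantee consistency. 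Finally, normalize so the weights sum to $1$, conclude the uniform vector is a convex combination of the $a_j$, and quote Theorem~\ref{general-T-dual-epsilon}(1)--(2) to get $\varepsilon = \frac{2}{n}-1$ and $(\frac1n,\ldots,\frac1n)_n\in{\cal LC}(\Gamma^1)$.

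The main obstacle I expect is verifying that the per-vertex weights can be made simultaneously equal across all three parts of the decomposition while keeping the coefficients nonnegative --- in particular handling the interaction at $A$, where a vertex $a\in A_2$ gets weight both from edges into odd components and possibly from other incident edges, and ensuring the factor-critical odd components can ``absorb'' the extra matched vertex from $A$ without unbalancing their internal weights. The hypothesis ${\cal D}_{02}=\emptyset$ is exactly what makes the singleton components non-problematic (a singleton cannot absorb anything, so it must be matched out, and it is); for non-singleton odd components factor-criticality does the work. A secondary check is that emptiness of the core (ruling out the star-like graph, so $G$ has no vertex meeting all edges and in particular $n\ge 3$ with enough edges) guarantees the fractional matching polytope is full-dimensional enough for the uniform point to lie inside it rather than on a degenerate face. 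If a fully explicit convex combination turns out to be messy, an alternative is to argue indirectly via $DLP_1^1$: show directly that $\frac{2}{n}-1$ is achievable as its optimal value by exhibiting the feasible $y$, which is the same computation phrased dually.
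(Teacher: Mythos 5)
Your overall strategy is the same as the paper's: reduce to Theorem~\ref{general-T-dual-epsilon} with $T=1$, and exhibit $(\frac{2}{n},\ldots,\frac{2}{n})$ as a convex combination of edge indicator vectors by assembling fractional matchings piecewise over the Gallai--Edmonds decomposition, using averaged near-perfect matchings on factor-critical components. The reduction, the normalization check, and the averaging fact (each vertex of a $k$-vertex factor-critical graph lies in $k-1$ of the $k$ near-perfect matchings, so uniform averaging gives every vertex weight $\frac{k-1}{k}$) are all correct and match the paper.

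However, the step you yourself flag as ``the main obstacle'' --- making the per-vertex weights simultaneously equal across $A$, the even components, and the odd components, in particular at the odd components that must ``absorb'' a vertex of $A$ --- is left genuinely unresolved, and your proposed assembly (average near-perfect matchings inside each odd component, then route $A$ into odd components on top of that) would actually fail: averaging already gives every vertex of such a component full weight, so adding weight on an $A$--$D$ edge overloads the $D$-endpoint. The paper's resolution is to make a single global choice first: take one maximum matching $M$ of $G$ that matches every singleton of ${\cal D}_0$ into $A$ (possible precisely because ${\cal D}_{02}=\emptyset$). Then $M$ restricted to $V({\cal B})\cup A\cup V({\cal D}_1)$, where ${\cal D}_1$ is the set of odd components fully covered by $M$, is a \emph{perfect} matching of that induced subgraph; each of its edges gets weight $\frac{2}{n}$, which handles the even components, all of $A$, and the absorbed odd components in one stroke with no averaging and no imbalance. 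Only the components in ${\cal D}_2$ left uncovered by $M$ (all of size $n_i\ge 3$, since the singletons are all in ${\cal D}_1$) receive the averaged near-perfect matchings, at weight $\frac{2}{(n_i-1)n}$ per edge. Every vertex then gets exactly $\frac{2}{n}$ and the weights sum to $1$. Your ``finite linear system whose consistency is guaranteed by Gallai--Edmonds identities'' is not a substitute for this choice; also, your closing remark about core-emptiness ensuring the polytope is ``full-dimensional enough'' plays no role --- the only hypothesis doing work is ${\cal D}_{02}=\emptyset$.
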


\begin{proof}
By Theorem \ref{general-T-dual-epsilon}, it is enough to show that $(\frac{2}{n},\cdots,\frac{2}{n})$
is the convex combination of the coefficients of $x_i+x_j\geq 1+\varepsilon,e=(i,j)\in E$.

Since ${\cal D}_{02}=\emptyset$, let $M$ be a maximum matching in $G$ such that $M$ matches all ${\cal D}_{0}$ into $A$.
Then ${\cal D}_{0}\subseteq{\cal D}_1\triangleq\{D\in {\cal D}:D \mbox{ is covered by} ~M \}$
and ${\cal D}_2 \triangleq {\cal D}\setminus {\cal D}_1$ is the set of uncovered factor-critical graphs.
So for any $D_i\in {\cal D}_2$, we have $n_i=|D_i|\geq 3$.
Let $\widetilde{M}$ is the induced matching from $M$ in $V({\cal B}) \cup A \cup V({\cal D}_1)$.
Then $\widetilde{M}$ is a perfect matching in $G[V({\cal B}) \cup A \cup V({\cal D}_1)]$.
For any edge $e\in \widetilde{M}$, we set $\frac{2}{n}$ to the element in the convex combination corresponding to this constraint.
There are in total $\frac{|V({\cal B}) \cup A \cup V({\cal D}_1)|}{2}$ constraints,
and each vertex in $V({\cal B}) \cup A \cup V({\cal D}_1)$ appears exactly once in these constraints.
For any $D_i\in{\cal D}_2$, since it is factor-critical, then for every vertex $k\in V(D_i)$,
there exists a perfect matching $M_k$ in $G[V(D_i)\setminus \{k\}]$.
For any edge $e\in M_k$, we set $\frac{2}{(n_i-1)n}$ to the element in the convex combination corresponding to this constraint.
When we traverse all vertices $k\in V(D_i)$, there are in total $\frac{n_i (n_i-1)}{2}$ constraints,
and for each vertex $k$, it appears in exactly $n_i-1$ constraints.
Set the elements in
the convex combination corresponding to all other constraints to be 0.
It is easy to check the convex combination of these constraints is $(\frac{2}{n},\cdots,\frac{2}{n})$.
This finishes our proof.
\qed
\end{proof}

If ${\cal D}_{02}\neq \emptyset$, we cannot find such a convex combination since
there is no edges in ${\cal D}_{02}$. But if we delete ${\cal D}_{02}$ from $G$,
we can find a convex combination in $G'=G[V\setminus {\cal D}_{02}]$ by using the similar argument in the proof of Theorem \ref{general-leastcore-T=1}.
Denote $\Gamma'$ to be the corresponding ECG defined on $G'$. Then the value of ${\cal LC}(\Gamma')$ is $\frac{2}{n'}-1$ where  $n'=n-|{\cal D}_{02}|$.
We can know $\frac{2}{n'}-1$ is an upper bound of the value of ${\cal LC}(\Gamma^1)$.
If we can find a feasible solution $(\widetilde{x},\varepsilon)$ of the original game
with $\varepsilon = \frac{2}{n'}-1$, then the value of ${\cal LC}(\Gamma^1)$ is also $\frac{2}{n'}-1$.
Consider the following imputation:
$$\widetilde{x}_i=
\begin{cases}
\frac{1}{n'}, & i\in V({\cal B}) \\
\frac{2}{n'}, & i\in V(A)\ \mbox{and} \ {\cal D}_{02}\rightarrow i\\
\frac{1}{n'}, & i\in V(A)\ \mbox{and} \  {\cal D}_{02}\nrightarrow i\\
0, & i\in V({\cal D}_{01})\ \mbox{and} \  {\cal D}_{02}\rightarrow i\\
\frac{1}{n'}, & i\in V({\cal D}_{01})\ \mbox{and} \  {\cal D}_{02}\nrightarrow i\\
0, & i\in V({\cal D}_{02}).
\end{cases}$$

Here, ${\cal D}_{02}\rightarrow i$ represents $i$ is reachable from ${\cal D}_{02}$
by $M_0$-alternating path in $G_0$;
${\cal D}_{02}\nrightarrow i$ represents $i$ is unreachable from ${\cal D}_{02}$
by $M_0$-alternating path in $G_0$.

We can easily check that the imputation $\widetilde{x}$ with $\varepsilon=\frac{2}{n'}-1$
is feasible. Otherwise, if there is an edge $e\in E$ such that
$\widetilde{x}(e) < \frac{2}{n'}$, $e$ must be between a vertex $i\in{\cal D}_{01}$ (and ${\cal D}_{02}\rightarrow i$)
and a vertex in $A_2$ or between a vertex in ${\cal D}_{02}$ and a vertex in $A_2$.
But there exists a $M_0$-augmenting path in $G_0$ in these two cases, contradicting to
$M$ is a maximum matching in $G_0$. Hence, the value of ${\cal LC}(\Gamma^1)$ is $\frac{2}{n'}-1$.

We then focus on the computing of nucleolus.
%As we can see, for $LP_k^1$ in $SLP(\eta(\Gamma^1))$, $2\leq k\leq n$,
%there may be exponential constraints. However, many of them are actually redundant.
Since we have seen $\varepsilon_1=\frac{2}{n'}-1$, we can prove $LP_k^1$ in $SLP(\eta(\Gamma^1))$ can be rewritten as:
%With a careful discussion of redundancy,

$$
LP_k^1:~\begin{array}{ll}
  \max & \quad \varepsilon  \\
 \mbox{s.t.} & \ \left\{ \begin{array}{ll}
  x(e)= \frac{2}{n'}-\varepsilon_1+\varepsilon_r, &~e\in E_r,r=1,\cdots,k-1\\
  x_i = -\varepsilon_1+\varepsilon_r, &~i\in V_r,r=1,\cdots,k-1\\
  x(e)\geq \frac{2}{n'}-\varepsilon_1+\varepsilon, &~e\in E\setminus \bigcup_{r=1}^{k-1}E_r\\
  x_i\geq -\varepsilon_1+\varepsilon, &~i\in V\setminus \bigcup_{r=1}^{k-1}V_r\\
  x(V)=1,~x_i\geq 0, &~i\in V.\\
\end{array} \right.
\end{array}
$$
Initially set $E_0=V_0=\emptyset$ and $\varepsilon_0=0$.
The number $\varepsilon_r$ is the optimal value of the $r$-th program $LP_r^1$,
and $E_r=\{e\in E: x(e)=1+\varepsilon_r,\forall x\in X_r\}$,
$V_r=\{i\in N: x_i=1-\frac{2}{n'}+\varepsilon_r,\forall x\in X_r\}$,
where $X_r=\{x\in R^n:(x,\varepsilon_r)$ is an optimal solution of $LP_r^1\}$.

In the next sections, we will show that for any threshold, $LP_k^T$ have the same appearance as $LP_k^1$ under some restrictions.
%Since each linear programming contains only polynomial constraints, nucleolus can be computed in polynomial time.

%Due to the space limitation, we leave the detailed proof in Appendix \ref{appendix_A}. This gives us the following result.

\def\nucleolus_T_1{
Given an ECG $\Gamma^1=(V;1)$,
the nucleolus $\eta(\Gamma^1)$ can be obtained in polynomial time.
}

\section{TCMG on Graph with Perfect Matching\label{sec:name:n2}}

Now we consider the general case $\Gamma^T=(V;T)$ with arbitrary threshold $1\leq T\leq v^*$.
We denote the corresponding sequential linear programming as $LP_k^T$.
%Due to the space limitation, we leave the proofs of theorems in this section in Appendix \ref{appendix_A}.

In the following theorem, we firstly show that for the graphs with a perfect matching, the least-core of $\Gamma^T$ is independent of $T$.
Then we use this characterization to prove that the nucleolus of $\Gamma^T$ can be obtained in polynomial time and $\eta(\Gamma^T)$ is also independent of $T$,
i.e., $LP_k^T$ can be rewritten as:
\begin{equation}\label{perfect-LP}
LP^T_k:~\begin{array}{ll}
  \max & \quad \varepsilon^T  \\
 \mbox{s.t.} & \ \left\{ \begin{array}{ll}
  x(e)=\frac{2}{n}-\varepsilon^T_1 +\varepsilon_r^T, &~e\in E_r,r=1,\cdots,k-1\\
  x_i= -\varepsilon_1^T+\varepsilon^T_r, &~i\in V_r,r=1,\cdots,k-1\\
  x(e)\geq \frac{2}{n}-\varepsilon^T_1+\varepsilon^T, &~e\in E\setminus \bigcup_{r=1}^{k-1}E_r\\
  x_i\geq -\varepsilon^T_1+\varepsilon^T, &~i\in V\setminus \bigcup_{r=1}^{k-1}V_r\\
  x(V)=1,~x_i\geq 0, &~i\in V.\\
\end{array} \right.
\end{array}
\end{equation}
Initially, we set $E_0=V_0=\emptyset$ and $\varepsilon_0^T=0$.
The number $\varepsilon_r^T$ is the optimal value of the $r$-th program $LP_r^T$,
and $E_r=\{e\in E: x(e)=1+\varepsilon_r^T,\forall x\in X_r\}$,
$V_r=\{i\in N: x_i=1-\frac{2}{n}+\varepsilon_r^T,\forall x\in X_r\}$,
where $X_r=\{x\in R^n:(x,\varepsilon_r^T)$ is an optimal solution of $LP_r^T\}$.

\def\lemmratioandpoint{
Suppose $G=(V,E)$  is a simple graph which has a perfect matching and $\Gamma^T=(V;T)$ is a TCMG defined on $G$. Let $\Gamma^1 = (V;1)$ be the corresponding ECG defined also on $G$.
Then
\begin{enumerate}
  \item the value of ${\cal LC}(\Gamma^T)$ is $\varepsilon^T_1=\frac{2T}{n}-1$;
  \item ${\cal LC}(\Gamma^T) = {\cal LC}(\Gamma^1)$;
  \item $\eta(\Gamma^T)=\eta(\Gamma^1)$.
\end{enumerate}
}

\begin{Theorem}
     \label{perfect-proposition}
     \lemmratioandpoint
\end{Theorem}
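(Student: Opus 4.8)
The plan is to reduce everything to the edge coalitional game $\Gamma^1$ and then invoke Theorem~\ref{general-leastcore-T=1} (more precisely its underlying argument via Theorem~\ref{general-T-dual-epsilon}), exploiting the perfect matching of $G$. For part~(1), I would first observe that $(x,\varepsilon)=(\tfrac1n,\dots,\tfrac1n,\tfrac{2T}{n}-1)$ is feasible for $LP_1^T$: every $M_T\in{\cal M}_T$ has exactly $T$ edges, hence covers $2T$ vertices, so $x(M_T)=\tfrac{2T}{n}=1+\varepsilon$. For the matching upper bound I would use Theorem~\ref{general-T-dual-epsilon}: it suffices to exhibit a convex combination of the indicator vectors $a_1,\dots,a_m$ of matchings in ${\cal M}_T$ equal to $(\tfrac{2T}{n},\dots,\tfrac{2T}{n})_n$. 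Here the perfect matching is the crucial ingredient: since $G$ has a perfect matching $M^*$ with $|M^*|=n/2\ge T$ (when the core is empty, $v^*=n/2>T$ or else there would be a veto player — I would check this reduction to the empty-core hypothesis), every $T$-subset of every perfect matching is a legitimate member of ${\cal M}_T$. By a symmetry/averaging argument — average the indicator vector of a uniformly random $T$-subset of a uniformly random perfect matching, or more constructively take all $T$-subsets of a single fixed perfect matching — each vertex receives total weight $\tfrac{2T}{n}$. This gives the desired convex combination, so by Theorem~\ref{general-T-dual-epsilon} the ${\cal LC}(\Gamma^T)$ value is exactly $\tfrac{2T}{n}-1$, and $(\tfrac1n,\dots,\tfrac1n)_n\in{\cal LC}(\Gamma^T)$.

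For part~(2), I would show the two polytopes $LP_1^T$ and $LP_1^1$ have the same feasible (optimal) region. Since $\varepsilon_1^T=\tfrac{2T}{n}-1$ and $\varepsilon_1^1=\tfrac2n-1$ (the latter from Theorem~\ref{general-leastcore-T=1}, which applies because a graph with a perfect matching has ${\cal D}_{02}=\emptyset$ — indeed ${\cal D}=\emptyset$ entirely, so $A=\emptyset$, ${\cal B}$ is the whole graph), the constraint $x(e)\ge 1+\varepsilon_1^1=\tfrac2n$ for every $e\in E$ must be verified to be equivalent to $x(M_T)\ge 1+\varepsilon_1^T=\tfrac{2T}{n}$ for every $M_T\in{\cal M}_T$, on the simplex $x(V)=1$, $x\ge 0$. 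The direction "$\Rightarrow$" is immediate by summing the $T$ edge-constraints of any $M_T$. For "$\Leftarrow$": given any edge $e=(i,j)$, extend it to a perfect matching $M^*$ (possible since $G$ has one, though extending a \emph{prescribed} edge needs care — I'd instead pick any perfect matching containing $e$ if one exists, and otherwise argue $e$ lies in some maximum matching and pad); take a $T$-subset $M_T\subseteq M^*$ containing $e$; then $x(e)=x(M_T)-\sum_{e'\in M_T\setminus\{e\}}x(e')$, and bounding each remaining $x(e')$ requires a lower bound on edge-sums — this is the delicate point. The cleaner route is: the least-core of $\Gamma^1$ is the $\tfrac2n$-core intersected with the simplex, and by part~(1) the least-core of $\Gamma^T$ is nonempty with value $\tfrac{2T}{n}-1$; show directly that $x\in{\cal LC}(\Gamma^1)$ iff $x(V)=1$, $x\ge0$, and $x(e)\ge\tfrac2n$ for all $e$, and that this set coincides with $\{x: x(V)=1, x\ge 0, x(M_T)\ge\tfrac{2T}{n}\ \forall M_T\}$ by the extension argument above made rigorous using that all $T$-subsets of perfect matchings are in ${\cal M}_T$.

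For part~(3), once parts~(1)--(2) are established I would argue that the \emph{entire} sequence of linear programs $SLP(\eta(\Gamma^T))$ and $SLP(\eta(\Gamma^1))$ coincide level by level, which is why $LP_k^T$ can be rewritten in the displayed form~\eqref{perfect-LP} with the same fixed sets $E_r,V_r$ as for $\Gamma^1$. The inductive claim is: after $r$ rounds, ${\cal J}_r$ (restricted to essential coalitions, legitimate by Proposition~\ref{essentialcoalitionpro}) for the two games correspond under the map $M_T\leftrightarrow$ its edges, with $\varepsilon_r^T-\varepsilon_r^1=\tfrac{2T}{n}-\tfrac2n=\tfrac{2(T-1)}{n}$ a constant shift; the key is that once all single-vertex coalitions reach their fixed values, an essential coalition $M_T$ is tight iff all $T$ of its edges are tight, so fixing edges in $\Gamma^1$ fixes exactly the corresponding $M_T$'s in $\Gamma^T$ and vice versa. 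The main obstacle I anticipate is this last step — verifying that the combinatorial structure of which edges get fixed at each round is genuinely identical, i.e. that passing from edges to $T$-matchings does not create or destroy binding constraints at any intermediate level; this needs the perfect-matching hypothesis used repeatedly to "complete" any partial configuration of tight edges into a tight $T$-matching. Granting that, $\eta(\Gamma^T)$ solves the same final LP as $\eta(\Gamma^1)$ up to the constant shift in $\varepsilon$, hence $\eta(\Gamma^T)=\eta(\Gamma^1)$, and since $\eta(\Gamma^1)$ is computable in polynomial time (as will be shown for ECG), so is $\eta(\Gamma^T)$.
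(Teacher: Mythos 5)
Your overall strategy is the paper's: part (1) via the convex-combination criterion of Theorem~\ref{general-T-dual-epsilon} (your averaging over all $T$-subsets of one perfect matching works just as well as the paper's cyclic family of $n/2$ blocks of $T$ consecutive matching edges), and parts (2)--(3) by translating $T$-matching constraints into edge constraints. But the two places you flag as ``delicate'' are exactly where the plan is incomplete, and in one of them the step you propose would fail. In part (2), to get $x(e)\ge\frac{2}{n}$ for an arbitrary edge $e=(i,j)$ you suggest extending $e$ to a perfect matching; this is not always possible (an edge of a graph with a perfect matching need not lie in \emph{any} perfect matching --- consider the middle edge of a path on four vertices), and even when it is, you correctly observe that you would then need \emph{upper} bounds on the other edge sums, which you never supply. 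The missing idea is that the $T$-matchings built from the fixed perfect matching $M^*$ in part (1) appear with strictly positive weight in the convex combination, so by complementary slackness (item 3 of Theorem~\ref{general-T-dual-epsilon}) their constraints are tight at every optimum of $LP_1^T$; the symmetric cyclic structure of that family then pins $x(e)=\frac{2}{n}$ \emph{exactly} for every $e\in M^*$. With those equalities in hand, any $e=(i,j)\notin M^*$ is padded with $T-1$ edges of $M^*$ avoiding $i$ and $j$ (available because emptiness of the core forces $T\le \frac{n}{2}-1$), and the constraint $x(M_T)\ge\frac{2T}{n}$ for the resulting $T$-matching yields $x(e)\ge\frac{2}{n}$. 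That is the whole content of ${\cal LC}(\Gamma^T)={\cal LC}(\Gamma^1)$.

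For part (3) you ``grant'' precisely the step that needs proof, namely that the fixed sets evolve identically round by round. The paper closes this by a domination argument on the essential coalitions: a winning coalition containing two or more unfixed edges (or any extra vertices) has excess at least that of the sub-coalition consisting of a single unfixed edge together with $T-1$ fixed edges of $M^*$, respectively a single unfixed vertex together with $T$ fixed edges; hence it cannot be fixed before these and is fixed automatically once they are. Consequently, at every round the only constraints that matter are $x(e)\ge\frac{2}{n}-\varepsilon^T_1+\varepsilon$ and $x_i\ge-\varepsilon^T_1+\varepsilon$, which are exactly the constraints of $LP_k^1$ shifted by the constant $\frac{2(T-1)}{n}$. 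This gives $E_r$ and $V_r$ identical for the two games and $\varepsilon^T_r=\varepsilon^1_r+\frac{2(T-1)}{n}$ by induction, hence $\eta(\Gamma^T)=\eta(\Gamma^1)$. Without this reduction your level-by-level correspondence is an assertion rather than a proof.
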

\begin{proof}
\

1. By Theorem \ref{general-T-dual-epsilon}, it is enough to show that $(\frac{2T}{n},\cdots,\frac{2T}{n})$
is the convex combination of the coefficients of $x(M_T)\geq 1+\varepsilon,M_T\in {\cal M_T}$.
Since $G$ has a perfect matching $M^*$, then $v^*=\frac{n}{2}$.
Without loss of generality, we reset the labels of players as
$(1,2),\cdots,(n-1,n)$, here $(i,j)$ means it is a matching edge in $M^*$.
Then there are constraints like the following in $LP_1^T$
\begin{equation}\label{perfect-T-plus}
\begin{aligned}
  & (x_1+x_2)+\cdots+(x_{2T-1}+x_{2T})\geq 1+\varepsilon^T  \\
  & (x_3+x_4)+\cdots+(x_{2T+1}+x_{2T+2})\geq 1+\varepsilon^T\\
  & \cdots\\
  & (x_{n-1}+x_n)+\cdots+(x_{2T-3}+x_{2T-2})\geq 1+\varepsilon^T.\\
\end{aligned}
\end{equation}
We put all the constraints above into the combination with an element $\frac{2}{n}$.
There are in total $\frac{n}{2}$ constraints, and each vertex in $V$ appears exactly $T$
times in these constraints.
It is easy to check this convex combination of these constraints is $(\frac{2T}{n},\cdots,\frac{2T}{n})$.

2. Suppose $x=(x_1,\cdots,x_n)$
with $\varepsilon^T=\frac{2T}{n}-1$ is an optimal solution in $LP^T_1$
(the first linear program in $SLP(\eta(\Gamma^T))$),
 $e=(i,j)$ is a maximum matching edge in $M^*$.
Since constraints in (\ref{perfect-T-plus}) are fixed, then
\begin{equation}\label{perfec-M}
x_i+x_j= \frac{2}{n},~~~\forall e=(i,j)\in M^*.
\end{equation}

If $e=(i,j)$ is not an edge of $M^*$, $e$ must be in
a matching $M'$ with size $v^*-1$ and all the other $v^*-2$ edges except $e$
are belonging to $M^*$, i.e., those edges are fixed to $\frac{2}{n}$.
Since for all $M_T\in \mathcal{M_T}$, $x(M_T)\geq 1+\varepsilon^T=\frac{2T}{n}$, we have
\begin{equation}\label{perfec-E-M}
	x_i+x_j\geq \frac{2}{n},\qquad \forall~e=(i, j)\in E\setminus M^*.
\end{equation}
From (\ref{perfec-M}) and (\ref{perfec-E-M}), we can see that
$(x,\frac{2}{n}-1)$ is also an optimal solution of $LP^1_1$.

On the other hand,
let $x'=(x'_1,\cdots,x'_n)$ with $\varepsilon^1_1=\frac{2}{n}-1$ be an
optimal solution in $LP^1_1$.
We can quickly check $x'$ with $\varepsilon_T=\frac{2T}{n}-1$ is an optimal solution in $LP^T_1$.

Therefore, ${\cal LC}(\Gamma^T)={\cal LC}(\Gamma)$.

3. Suppose the set of fixed constraints in $LP^T_1$ is ${\cal M}'_T\subseteq {\cal M}_T$
(and the set of fixed coalitions is also ${\cal M}'_T$) and
$E_1$ is the fixed constraints of $LP^1_1$. Firstly, we prove that
$E_{\cal T}=E_1$, $E_{\cal T}=\{e|e\in M_T\in {\cal M}'_T\}$.

Since the system of linear equations
$x(M_T)=\frac{2T}{n},M_T\in {\cal M}'_T$ is equivalent to the system of linear equations
$x(e)=\frac{2}{n},e\in E_T$.
Otherwise, if there exists some $e\in M_T\in {\cal M}'_T$ with $x(e)>\frac{2}{n}$,
$x(M_T)$ cannot equal to $\frac{2T}{n}$,
due to $x(e)\geq \frac{2}{n},\forall e\in E$.
Because ${\cal LC}(\Gamma^T)={\cal LC}(\Gamma)$, $E_{\cal T}=E_1$.

Then we simplify the sequence of linear programs.

\texttt{Case} 1:  Consider the winning constraints like
$$x(S)=\sum_{e\in E'}{x(e)}+\sum_{e\in E''}{x(e)}+\sum_{i\in V'}{x_i} \geq 1+\varepsilon^T.$$
Here, $E'\subseteq E,E'\cap E_1=\emptyset$, $E''\subseteq E_1$ and $V'\subseteq V$.
The size of the maximum matching in $S=V(E')\cup V(E'')\cup V'$ is not less than $T$ and suppose $|E'|\geq 2$.
It will be fixed after any
\begin{equation}\label{more-2-unfixed-1}
x(e)+\sum_{e\in S'\subseteq E_1,~|S'|=T-1}{x(e)}\geq 1+\varepsilon^T,~e\in E'
\end{equation}
and
\begin{equation}\label{more-2-unfixed-2}
x_i+\sum_{e\in S''\subseteq E_1,~|S''|=T}{x(e)}\geq 1+\varepsilon^T,~i\in V',
\end{equation}
since the excess of the coalition $S$ is not less than
any coalition $\{e\}\cup S'$ or $\{i\}\cup S''$.
Moreover, it will be fixed automatically after
all constraints like (\ref{more-2-unfixed-1}) and (\ref{more-2-unfixed-2}) get fixed.

Due to result 1 and 2 above,
we can rewrite (\ref{more-2-unfixed-1}) and (\ref{more-2-unfixed-2})
as
\begin{equation}\label{x_e-in-perfect-nucleolus-formal}
x(e)\geq \frac{2}{n}-\varepsilon^T_1+\varepsilon^T,~e\in E\setminus E_1,
\end{equation}
\begin{equation}\label{x_i-in-perfect-nucleolus-formal}
x_i\geq -\varepsilon^T_1+\varepsilon^T,~i\in V.
\end{equation}

\texttt{Case} 2: Consider the losing constraints like
$$x(S)=\sum_{e\notin E_1}{x(e)}+\sum_{e\in E_1}{x(e)}+\sum{x_i} \geq \varepsilon^T.$$
The maximum matching in $S$ is less than $T$ and it will be fixed after
(\ref{x_e-in-perfect-nucleolus-formal}) and
(\ref{x_i-in-perfect-nucleolus-formal}), since the excess of the coalition $S$ is not less than
any subset in $S$.
Moreover, it will be fixed automatically after
all constraints like (\ref{x_e-in-perfect-nucleolus-formal})
and (\ref{x_i-in-perfect-nucleolus-formal}) get fixed.

Hence, the sequential linear programs $LP^T_k$ of $\eta(\Gamma^T)$ can be rewritten as
linear program (\ref{perfect-LP}).
It is obvious
that the optimal solutions of  $LP^T_k$ and $LP^1_k$ are the same
except $\varepsilon^T_k=\varepsilon^1_k+\frac{2(T-1)}{n}$ before
$\varepsilon_k$ gets positive, i.e., $\eta(\Gamma^T)=\eta(\Gamma)$.

This finishes our proof.
\qed

\end{proof}

    \section{TCMG on Bipartite Graphs\label{sec:name:n3}}

For bipartite graphs, we can obtain the similar result as Theorem \ref{perfect-proposition}.
Let $G=(L,R;E)$ be a bipartite graph with vertex set $L\cup R$ and edge set $E$.
Find a maximum matching $M^*$ in $G$. Denote the matched vertices
in $L$ and $R$ as $L_1$ and $R_1$ with respect to $M^*$ respectively.
Let $L_2 = L\setminus L_1$ and $R_2 = R\setminus R_1$.

If both $L_2$ and $R_2$ are empty, it is reduced to the situation in section \ref{sec:name:n2}.
So we assume at least one of $L_2$ and $R_2$ is not empty.

If we delete $L_2$ and $R_2$ from $G$,
we can find the least-core value and an imputation in least-core by Theorem \ref{perfect-proposition}.
Denote $\Gamma'$ to be the corresponding TCMG defined on $G'$ where $G'$ is the induced subgraph by $(L\cup R)\setminus(L_2\cup R_2)$ in $G$.
Then the value of ${\cal LC}(\Gamma')$ is $\frac{2T}{n'}-1$ where  $n'=n-|L_2|-|R_2|$.
It is obvious that $\frac{2T}{n'}-1$ is an upper bound of the value of ${\cal LC}(\Gamma^T)$.
We then show that this is actually the value of the least-core in the bipartite graphs.

\def\bipartitegraph{
Suppose $G=(L,R;E)$  is a bipartite graph and $\Gamma^T=(V;T)$ is a TCMG defined on $G$. Let $\Gamma^1 = (V;1)$ be the corresponding ECG defined also on $G$.
Then
\begin{enumerate}
  \item the value of ${\cal LC}(\Gamma^T)$ is $\varepsilon^T_1=\frac{2T}{n'}-1$;
  \item ${\cal LC}(\Gamma^T) = {\cal LC}(\Gamma^1)$;
  \item $\eta(\Gamma^T)=\eta(\Gamma^1)$.
\end{enumerate}
Here $n'=n-|L_2|-|R_2|$.
}

\begin{Theorem}
     \label{the_bipartitegraph}
     \bipartitegraph
\end{Theorem}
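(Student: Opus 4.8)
The plan is to mirror the structure of the proof of Theorem \ref{perfect-proposition}, reducing the bipartite case to the perfect-matching case via the subgraph $G'$ obtained by deleting the $M^*$-exposed vertices $L_2 \cup R_2$. First I would establish part 1. The upper bound $\varepsilon_1^T \le \frac{2T}{n'}-1$ is already given in the text (by Theorem \ref{perfect-proposition} applied to $\Gamma'$ plus the observation that deleting vertices only enlarges the feasible region of $LP_1^T$). For the matching lower bound I would exhibit an explicit feasible solution of $LP_1^T$ with $\varepsilon = \frac{2T}{n'}-1$: set $x_i = \tfrac{1}{n'}$ for $i \in (L\cup R)\setminus(L_2\cup R_2)$ and a carefully chosen value (one expects $0$, or something determined by $M^*$-alternating reachability, exactly as in the $T=1$ analysis preceding this theorem) for $i \in L_2 \cup R_2$. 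Since $G'$ has a perfect matching, Theorem \ref{perfect-proposition}(1) gives $(\tfrac{1}{n'},\dots,\tfrac{1}{n'}) \in {\cal LC}(\Gamma')$, so every $M_T \subseteq G'$ satisfies $x(M_T) \ge \tfrac{2T}{n'}$; the work is to check that a size-$T$ matching $M_T$ of $G$ that uses some vertex of $L_2\cup R_2$ still has $x(M_T)\ge \tfrac{2T}{n'}$. This is where the König/Hall structure of bipartite graphs enters: any such $M_T$ can be rerouted along an $M^*$-alternating path, and the absence of $M^*$-augmenting paths forces the relevant endpoints to carry enough weight; equivalently one invokes Theorem \ref{general-T-dual-epsilon} by building the required convex combination of indicator vectors $a_j$ from near-perfect matchings of the components, together with the Gallai–Edmonds / bipartite-specific fact that $A_2$-adjacent vertices in $R_2$ are blocked.

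Next I would prove part 2, ${\cal LC}(\Gamma^T) = {\cal LC}(\Gamma^1)$, following the two-inclusion argument of Theorem \ref{perfect-proposition}(2). Given an optimal $x$ for $LP_1^T$ with $\varepsilon^T = \frac{2T}{n'}-1$, I would first argue that the constraints forcing the least-core value are tight, which pins down $x(e) = \tfrac{2}{n'}$ on a perfect matching of $G'$ and, more generally, pins down $x$ on all of $(L\cup R)\setminus(L_2\cup R_2)$ together with the zero (or reachability-determined) values on $L_2\cup R_2$; then for an arbitrary edge $e=(i,j)$ of $G$ I would embed $e$ into a size-$T$ matching whose other $T-1$ edges lie in the fixed perfect matching of $G'$ (possible because $G'$ has a perfect matching and hence matchings of every size up to $v^*(G')\ge T$), concluding $x(e) \ge \tfrac{2}{n'} = 1 + \varepsilon_1^1$, so $x$ is feasible — and optimal — for $LP_1^1$. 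The reverse inclusion is the same embedding run backwards: an optimal $x'$ for $LP_1^1$ automatically satisfies $x'(M_T) \ge T\cdot\tfrac{2}{n'} = 1+\varepsilon_1^T$ for every $M_T \in {\cal M}_T$.

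For part 3, $\eta(\Gamma^T) = \eta(\Gamma^1)$, I would reuse verbatim the ``essential coalition'' reduction (Proposition \ref{essentialcoalitionpro}) and then the Case 1 / Case 2 dissection from the proof of Theorem \ref{perfect-proposition}: once $\varepsilon_1^T$ and the first fixed set are shown to match those of the ECG, a winning coalition $S$ with $\ge 2$ unfixed edges, or any losing coalition, becomes fixed only after all the single-edge constraints $x(e)\ge \tfrac{2}{n'}-\varepsilon_1^T+\varepsilon^T$ and single-vertex constraints $x_i \ge -\varepsilon_1^T+\varepsilon^T$ are fixed, so the sequential linear programs $LP_k^T$ collapse to exactly the system \eqref{perfect-LP} (with $n$ replaced by $n'$), which is the same system — up to the harmless shift $\varepsilon_k^T = \varepsilon_k^1 + \tfrac{2(T-1)}{n'}$ before $\varepsilon_k$ turns positive — that computes $\eta(\Gamma^1)$. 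Since $\eta$ is unique, $\eta(\Gamma^T)=\eta(\Gamma^1)$.

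The main obstacle I anticipate is part 1's lower bound, specifically verifying $x(M_T) \ge \tfrac{2T}{n'}$ for size-$T$ matchings of $G$ that touch $L_2 \cup R_2$: this is precisely the place where the perfect-matching proof's clean relabeling argument breaks down, and it requires the bipartite augmenting-path machinery (``no $M^*$-augmenting path'' $\Rightarrow$ the exposed side cannot be cheaply matched) together with getting the right payoff on $L_2\cup R_2$ — likely $0$ on $L_2, R_2$ and on the $M^*$-partners reachable from them, exactly paralleling the $\widetilde{x}$ construction for ECGs just before this theorem. A secondary technical point is confirming that $G'$ indeed has a perfect matching (so Theorem \ref{perfect-proposition} genuinely applies): this follows because $L_1$ and $R_1$ are the saturated sides of a maximum matching $M^*$ of the bipartite graph, and no $M^*$-augmenting path between $L_2$ and $R_2$ can exist, but it should be stated explicitly.
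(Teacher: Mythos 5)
Your overall route is the paper's route: pass to $G'=G[(L\cup R)\setminus(L_2\cup R_2)]$, which carries the perfect matching $M^*$, get the upper bound $\varepsilon_1^T\le\frac{2T}{n'}-1$ from Theorem~\ref{perfect-proposition} applied to $\Gamma'$, match it with an explicit feasible imputation, and inherit parts 2 and 3 from the proof of Theorem~\ref{perfect-proposition} (the paper itself dispatches those two parts in one sentence). The gap is in the imputation for part 1. Since $x(V)=1$ and $x\ge 0$, putting $x_i=\frac{1}{n'}$ on all of $G'$ forces $x_i=0$ on $L_2\cup R_2$; but then any edge $e=(u,r_2)$ with $u\in L_1$, $r_2\in R_2$ (such edges exist whenever $R_2$ has a non-isolated vertex, and its neighbors must lie in $L_1$) has $x(e)=\frac{1}{n'}$, and completing $e$ with $T-1$ edges of $M^*$ inside $G'$ avoiding $u$ gives a size-$T$ matching with $x(M_T)=\frac{2T-1}{n'}<\frac{2T}{n'}=1+\varepsilon$. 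So the uniform-on-$G'$ imputation you start from is infeasible, and no choice of values on $L_2\cup R_2$ alone can repair it.

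The correct imputation is genuinely non-uniform on $G'$: the paper gives $\frac{2}{n'}$ to the vertices of $R_1$ reachable from $L_2$ and of $L_1$ reachable from $R_2$ by $M^*$-alternating paths (the sets $R_{11}\cup L_{12}$), $0$ to their matched partners $L_{11}\cup R_{12}$, $\frac{1}{n'}$ to the unreachable remainder $L_{13}\cup R_{13}$, and $0$ on $L_2\cup R_2$. Your closing paragraph correctly guesses the ``$0$ on the reachable $M^*$-partners'' half of this, but never identifies which vertices absorb the compensating $\frac{2}{n'}$, and without that the feasibility check cannot be carried out. You also omit the well-definedness step the paper proves explicitly: no vertex is alternating-reachable from both $L_2$ and $R_2$ (splicing the two paths at their first common vertex would yield an $M^*$-augmenting path from $L_2$ to $R_2$), so the $\frac{2}{n'}$ and $0$ classes do not collide. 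Once the imputation is pinned down, every edge has value at least $\frac{2}{n'}$ except those whose existence would create an augmenting path or extend reachability, exactly as in the ${\cal D}_{02}\neq\emptyset$ analysis of Section~\ref{sec:name:n1}; that is the argument your sketch needs but does not yet contain.
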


\begin{proof}
We only prove the fist result. The second and the third ones are the same as Theorem \ref{perfect-proposition}.

If we can find a feasible solution $(\widetilde{x},\varepsilon)$ of the original game
with $\varepsilon = \frac{2T}{n'}-1$, then the value of ${\cal LC}(\Gamma^1)$ is also $\frac{2T}{n'}-1$.
Here, we use $L_2\rightarrow i,i\in L_1\cup R_1$ to represent that $i$ is reachable from $L_2$
by $M^*$-alternating path in $G$. We denote these vertices which are in $L_1$ by $L_{11}$ and the vertices which are in $R_1$ by $R_{11}$;
Similarly, $R_2\rightarrow i,i\in L_1\cup R_1$ represents $i$ is reachable from $R_2$
by $M^*$-alternating path in $G$, and we denote these vertices which are in $L_1$ by $L_{12}$ and the vertices which are in $R_1$ by $R_{12}$.
Denote $L_{13}=L_1\backslash (L_{11}\cup L_{12})$, $R_{13}=R_1\backslash(R_{11}\cup R_{12})$.
Note that $L_{11}\cap L_{12}=\emptyset$ and $R_{11}\cap R_{12}=\emptyset$.
Otherwise, without loss of generality, we assume $k\in L_{11}\cap L_{12}$.
Then, there exists two $M^*$-alternating paths in $G$, $P_1:l_2 \rightarrow k, l_2\in L_2$ and $P_2:r_2\rightarrow k, r_2\in R_2$.
Find $k'$ to be the first intersection point of $P_1$ and $P_2$. Here first intersection point means there is no other intersection points
locating in $l_2 \xrightarrow{P_1} k'$ or $r_2 \xrightarrow{P_2} k'$.
We know that one of the two edges incidence of $k'$ in $P_1$ and $P_2$ is matched edge and the other is unmatched.
Then we find a $M^*$-augmenting path in $G$:
$l_2 \xrightarrow{P_1} k' \xrightarrow{P_2} r_2$,
contradicting to $M^*$ is a maximum matching .
Hence such $k$ does not exist.

Consider the following imputation:
$$\widetilde{x}_i=
\begin{cases}
0, & i\in L_2\cup R_2\\
\frac{2}{n'}, & i\in R_{11} \cup L_{12} \\
0, & i\in L_{11}\cup R_{12} \\
\frac{1}{n'}, & i\in L_{13}\cup R_{13}\\
\end{cases}$$

Therefore our imputation $\widetilde{x}$ is well defined.
Then we can easily check that the imputation $\widetilde{x}$ with $\varepsilon=\frac{2T}{n'}-1$ is feasible like section \ref{sec:name:n1}.
In fact, we can find the union of $L_{11}$ and $R_{12}$ (or $L_{12}$ and $R_{11}$) is a Tutte set in Gallai-Edmonds Decomposition.
\qed
\end{proof}

    \section{Conclusion \label{sec:name:Conclusion}}

In this paper, we first design a polynomial time algorithm to compute 
the least-core for threshold cardinality matching games.
Based on a general characterization of the least-core for TCMG, 
we show that computing the nucleolus can be done efficiently for TCMGs  
defined on graphs with a perfect matching and bipartite graphs.

We conjecture that the ideas behind these results can be generalized to 
compute the nucleolus of TCMGs defined on general graphs. 
Another interesting direction is to understand how far can these 
techniques be extended to the computation of the least-core and the 
nucleolus of other threshold versions of cooperative games~\cite{Aziz:2010}.

%It has been seen that threshold versions of cooperative games are generally hard to compute\cite{Aziz:2010}.
%For example, flow games are balanced whereas a threshold flow game
%has an empty core if it contains no veto players(also see \cite{bachrach:2011},\cite{kalai1982generalized}).
%It will be interesting to extend our results in the computation of the least-core and the nucleolus of other threshold versions of cooperative games.

    \bibliography{ref}

\begin{thebibliography}{10}

\bibitem{Aziz:2010}
Haris Aziz, Felix Brandt, and Paul Harrenstein.
\newblock Monotone cooperative games and their threshold versions.
\newblock In {\em Proceedings of the 9th International Conference on Autonomous
  Agents and Multiagent Systems}, volume~1, pages 1107--1114, 2010.

\bibitem{aziz2009wiretapping}
Haris Aziz, Oded Lachish, Mike Paterson, and Rahul Savani.
\newblock Wiretapping a hidden network.
\newblock In {\em Internet and Network Economics}, pages 438--446. Springer,
  2009.

\bibitem{Biro:2012}
P{\'e}ter Bir{\'o}, Walter Kern, and Dani{\"e}l Paulusma.
\newblock Computing solutions for matching games.
\newblock {\em International journal of game theory}, 41(1):75--90, 2012.

\bibitem{branzei2006simple}
Rodica Br{\^a}nzei, Elena I{\~n}arra, Stef Tijs, and Jos{\'e}~M Zarzuelo.
\newblock A simple algorithm for the nucleolus of airport profit games.
\newblock {\em International Journal of Game Theory}, 34(2):259--272, 2006.

\bibitem{Chen:2012}
Ning Chen, Pinyan Lu, and Hongyang Zhang.
\newblock Computing the nucleolus of matching, cover and clique games.
\newblock In {\em Proceedings of the 26th AAAI Conference on Artificial
  Intelligence}, 2012.

\bibitem{Deng:2009}
Xiaotie Deng, Qizhi Fang, and Xiaoxun Sun.
\newblock Finding nucleolus of flow game.
\newblock {\em Journal of combinatorial optimization}, 18(1):64--86, 2009.

\bibitem{Deng:1999}
Xiaotie Deng, Toshihide Ibaraki, and Hiroshi Nagamochi.
\newblock Algorithmic aspects of the core of combinatorial optimization games.
\newblock {\em Mathematics of Operations Research}, 24(3):751--766, 1999.

\bibitem{Deng:1994}
Xiaotie Deng and Christos~H Papadimitriou.
\newblock On the complexity of cooperative solution concepts.
\newblock {\em Mathematics of Operations Research}, 19(2):257--266, 1994.

\bibitem{edmonds1965}
Jack Edmonds.
\newblock Paths, trees, and flowers.
\newblock {\em Canadian Journal of mathematics}, 17(3):449--467, 1965.

\bibitem{Elkind:2007}
Edith Elkind, Leslie~Ann Goldberg, Paul~W Goldberg, and Michael Wooldridge.
\newblock Computational complexity of weighted threshold games.
\newblock In {\em Proceedings of the National Conference on Artificial
  Intelligence}, volume~22, page 718, 2007.

\bibitem{Elkind:2009}
Edith Elkind and Dmitrii Pasechnik.
\newblock Computing the nucleolus of weighted voting games.
\newblock In {\em Proceedings of the 12th Annual ACM-SIAM Symposium on Discrete
  Algorithms}, pages 327--335, 2009.

\bibitem{faigle1998note}
Ulrich Faigle, Walter Kern, and Jeroen Kuipers.
\newblock Note computing the nucleolus of min-cost spanning tree games is
  np-hard.
\newblock {\em International Journal of Game Theory}, 27(3):443--450, 1998.

\bibitem{granot1996kernel}
Daniel Granot, M~Maschler, G~Owen, and WR~Zhu.
\newblock The kernel/nucleolus of a standard tree game.
\newblock {\em International Journal of Game Theory}, 25(2):219--244, 1996.

\bibitem{Kern:2003}
Walter Kern and Dani{\"e}l Paulusma.
\newblock Matching games: the least-core and the nucleolus.
\newblock {\em Mathematics of Operations Research}, 28(2):294--308, 2003.

\bibitem{Kope:1967}
Alexander Kopelowitz.
\newblock Computation of the kernels of simple games and the nucleolus of
  n-person games.
\newblock Technical report, DTIC Document, 1967.

\bibitem{Maschler:1979}
Michael Maschler, Bezalel Peleg, and Lloyd~S Shapley.
\newblock Geometric properties of the kernel, nucleolus, and related solution
  concepts.
\newblock {\em Mathematics of Operations Research}, 4(4):303--338, 1979.

\bibitem{megiddo1978computational}
Nimrod Megiddo.
\newblock Computational complexity of the game theory approach to cost
  allocation for a tree.
\newblock {\em Mathematics of Operations Research}, 3(3):189--196, 1978.

\bibitem{Osborne:1994}
Martin~J Osborne and Ariel Rubinstein.
\newblock A course in game theory.
\newblock {\em Cambridge, Massachusetts}, 1994.

\bibitem{Plummer:1986}
MD~Plummer and L~Lov{\'a}sz.
\newblock {\em Matching theory}.
\newblock Access Online via Elsevier, 1986.

\bibitem{Schmeidler:1969}
David Schmeidler.
\newblock The nucleolus of a characteristic function game.
\newblock {\em SIAM Journal on applied mathematics}, 17(6):1163--1170, 1969.

\bibitem{shapley1971assignment}
Lloyd~S Shapley and Martin Shubik.
\newblock The assignment game i: The core.
\newblock {\em International Journal of Game Theory}, 1(1):111--130, 1971.

\bibitem{Solymosi:1994}
Tam{\'a}s Solymosi and Tirukkannamangai~ES Raghavan.
\newblock An algorithm for finding the nucleolus of assignment games.
\newblock {\em International Journal of Game Theory}, 23(2):119--143, 1994.

\bibitem{West:2011}
Douglas~B West.
\newblock A short proof of the berge--tutte formula and the gallai--edmonds
  structure theorem.
\newblock {\em European Journal of Combinatorics}, 32(5):674--676, 2011.

\end{thebibliography}

\end{document}